\tikzstyle{background}=[rectangle,fill=gray!10, inner sep=0.1cm, rounded corners=0mm]
\tikzstyle{nloc}=[draw, text badly centered, rectangle, rounded corners, minimum size=2em,inner sep=0.5em]
\tikzstyle{background}=[rectangle,fill=gray!10, inner sep=0.1cm, rounded corners=0mm]
\tikzstyle{loc}=[draw,rectangle,minimum size=1.4em,inner sep=0em]
\tikzset{
    gluon/.style={decorate,draw=black,
        decoration={coil,amplitude=1pt, segment length=5pt}} 
}
\tikzset{
    gluonew/.style={decorate,draw=black,
        decoration={coil,amplitude=1pt, segment length=2pt}} 
}
\tikzset{
    gluon1/.style={decorate,draw=black,
        decoration={coil,amplitude=3pt, segment length=3pt}} 
}
\newcommand{\mypara}[1]{\smallskip\noindent{\textup{\textbf{#1.}}}}
\newcommand{\stackalphabet}{\Lambda}
\newcommand{\AKMV}{\ensuremath{\mathcal{A}^{K,M}_{\textsf{valid}}}\xspace}
\newcommand{\AKMS}{\ensuremath{\mathcal{A}^{K,M}_{\Sys}}\xspace}
\newcommand{\TCW}{\ensuremath{\mathsf{TCW}}\xspace}
\newcommand{\TCWs}{\ensuremath{\mathsf{TCWs}}\xspace}
\newcommand{\Sys}{\ensuremath{\mathcal{S}}\xspace}
\newcommand{\Lang}{\ensuremath{\mathcal{L}}\xspace}
\newcommand{\op}{\ensuremath{\mathsf{op}}\xspace}
\newcommand{\nop}{\ensuremath{\mathsf{nop}}\xspace}
\newcommand{\push}{\ensuremath{\downarrow}\xspace}
\newcommand{\pop}{\ensuremath{\uparrow}\xspace}
\newcommand{\stt}{\tau}
\newcommand{\TT}{\ensuremath{\mathsf{TT}}\xspace}
\newcommand{\TTs}{\ensuremath{\mathsf{TTs}}\xspace}
\newcommand{\kTT}{$K$-\TT}
\newcommand{\kTTs}{$K$-\TTs}
\newcommand{\tcw}{\ensuremath{\mathcal{V}}\xspace}
\newcommand{\add}[3]{\mathop{\mathsf{Add}_{#1,#2}^{#3}}}
\newcommand{\forget}[1]{\mathop{\mathsf{Forget}_#1}}
\newcommand{\rename}[2]{\mathop{\mathsf{Rename}_{#1,#2}}}
\newcommand{\N}{\mathbb{N}}
\newcommand{\sem}[1]{\llbracket #1 \rrbracket}
\newcommand{\matchrel}{\mathbin{\vartriangleright}}
\newcommand{\procrel}{\rightarrow}
\newcommand{\hole}{\dashrightarrow}
 \newcommand{\cL}{{\cal L}}
\newcommand{\cI}{{\cal I}} 
\newcommand{\cO}{{\cal O}}
\newcommand{\true}{\textsf{tt}} 
\newcommand{\false}{\textsf{ff}} 
\newcommand{\dom}{\ensuremath{\textsf{{\sffamily dom}}}}
\newcommand{\Real}{\mathsf{Real}}
\newcommand{\Act}{\ensuremath{\mathsf{Act}}\xspace}
\newcommand{\Left}{\ensuremath{\mathsf{Left}}\xspace}
\newcommand{\Right}{\ensuremath{\mathsf{Right}}\xspace}
\newcommand{\nxt}{\ensuremath{\mathsf{next}}\xspace}
\newcommand{\prv}{\ensuremath{\mathsf{prev}}\xspace}
\newcommand{\ts}{\ensuremath{\mathsf{ts}}\xspace}
\newcommand{\tsm}{\ensuremath{\mathsf{tsm}}\xspace}
\newcommand{\acc}{\ensuremath{\mathsf{acc}}\xspace}
\newcommand{\ACC}{\ensuremath{\mathsf{ACC}}\xspace}
\newcommand{\Bool}{\ensuremath{\mathbb{B}}\xspace}
\renewcommand{\matchrel}{\curvearrowright}
\newcommand{\pushb}{\mathsf{Push}}
\newcommand{\popb}{\mathsf{Pop}}
\newcommand{\reset}{\mathsf{Reset}}
\newtheorem{claim}[theorem]{Claim}
\newcounter{todocounter}
\begin{document}

\title{Towards an Efficient Tree Automata based technique for Timed Systems
  \footnote{This work was partly supported by UMI-ReLaX, DST-CEFIPRA project AVeRTS and DST-INSPIRE faculty award [IFA12-MA-17].}}
\titlerunning{Towards an Efficient Tree Automata based technique for Timed Systems}

\author[1]{S. Akshay}
\author[2]{Paul Gastin}
\author[1]{S. Krishna}
\author[1]{Ilias Sarkar}
\affil[1]{Dept of CSE, IIT Bombay, India\\
  \texttt{akshayss,krishnas,ilias@cse.iitb.ac.in}}
\affil[2]{LSV, ENS Paris-Saclay, CNRS, France\\
  \texttt{paul.gastin@lsv.fr}}
\authorrunning{S. Akshay, P. Gastin, S. Krishna, I. Sarkar} 

\maketitle

\begin{abstract}
The focus of this paper is the analysis of real-time systems with recursion, through the development of good theoretical techniques which are implementable. Time is modeled using clock variables, and recursion using stacks. Our technique consists of  modeling the behaviours of the timed system as graphs, and interpreting these graphs on tree terms by showing a bound on their tree-width. We then build a tree automaton that accepts exactly those tree terms that describe realizable runs of the timed system. The emptiness of the timed system thus boils down to emptiness of a finite tree automaton that accepts these tree terms. This approach helps us in obtaining an optimal complexity, not just in theory (as done in earlier work e.g.~\cite{concur16}), but also in going towards an efficient implementation of our technique. To do this, we make several improvements in the theory and exploit these to build a first prototype tool that can analyze timed systems with recursion.
\end{abstract}

\section{Introduction}
Development of efficient techniques for the verification of real time systems is a practically relevant problem.  Timed automata \cite{AD94} are a  prominent and well accepted abstraction of timed systems.  The development of this model originally began with highly theoretical results, starting from the $\mathsf{PSPACE}$-decision procedure for the emptiness of timed automata. But later, this theory has led to the development of state of the art and industrial strength tools like UPPAAL \cite{uppaal}. Currently, such tools are being adapted to build prototypes that handle other systems such timed games, stochastic timed systems etc. 
While this helps in analysis of certain systems, there are complicated real life examples that require paradigms like recursion, multi-threaded concurrency and so on.  

For timed systems with recursion, a popular theoretical framework is the model of timed pushdown automata (TPDA). In this model, in addition to clock variables as in timed automata, a stack is used to model recursion. Depending on how clocks and stack operations are integrated, several variants~\cite{BER94}, \cite{lics12}, \cite{TW10}, \cite{LCOY13}, \cite{CL15} have been looked at. For many of these variants, the basic problem of checking emptiness has been shown decidable (and $\mathsf{EXPTIME}$-complete) using different techniques. The proofs in \cite{BER94}, \cite{lics12}, \cite{TW10} work by adapting the technique of region abstraction to untime the stack and obtain a usual untimed pushdown automaton, while \cite{CL15} gives a proof by reasoning with sets of timed atoms. Recently, in~\cite{concur16}, a new proof technique was introduced which modeled the behaviours of the TPDA as graphs with timing constraints and analyzed these infinite collections of time-constrained graphs using tree automata. This approach follows the template which has been explored in depth for various untimed systems in~\cite{MP11}, \cite{CGK12}, \cite{CG14}. The basic idea can be outlined as follows: (1) describe behaviours of the underlying system as graphs, (2) show that this class of graphs has bounded width, (3) either appeal to Courcelle's theorem~\cite{CourcelleBook} by showing that the desired properties are MSO-defineable or explicitly construct a tree-automaton to capture the class of graphs that are the desired behaviours. The work in~\cite{concur16} extends this approach to timed systems, by considering their behaviors as time-constrained words. The main difficulty here is to obtain a tree automaton that accepts only those time-constrained words that are \emph{realizable} via a valid time-stamping.

Despite the amount of theoretical work in this area~\cite{BER94,MP11,CGK12,concur16,lics12,CL15}, none of these algorithms have been implemented to the best of our knowledge. Applying Courcelle's theorem is known to involve a blowup in the complexity (depending on the quantifier-alternation of the MSO formula). The algorithm for checking emptiness in~\cite{concur16} for the timed setting  which directly constructs the tree automaton avoiding the MSO translation also turns out to be unimplementable even for small examples due to the following reasons:  First, it has a pre-processing step where each transition in the underlying automaton is broken into several micro transitions, one for each  constraint that is checked there, and one corresponding to each clock that gets reset on that transition.  This results in a blowup in the size of the automaton. Second, the number of states of the tree automaton that is built to check realizabilty as well as the existence of a run of a system is bounded by $(M\times T)^{\mathcal{O}(K^2)}2^{\mathcal{O}(K^2lg K)}$, where $M$ is one more than the maximal constant used in the given system, $T$ is the number of transitions, and $K=4|X|+6$ is the so-called split-width, where $|X|$ is the number of clocks used. This implies that even for a system that has $1$ clock, $5$ transitions and uses a maximum constant $5$, we have more than $30^{100}$ states.

In this paper, we take the first steps towards an efficient implementation. While we broadly follow the graph and tree-automata based approach (and in particular~\cite{concur16}), our main contribution is to give an efficient technique for analyzing TPDA. This requires several fundamental advances: (i) we avoid the preprocessing step, obtaining a direct bound on tree width for timed automata and TPDA. This is established by playing a \emph{split-game} which decomposes the graph representing behaviours of the timed system into tree terms; by coloring some vertices of the graph and removing certain edges whose endpoints are colored. The minimum number of colors used in a winning strategy is 1 plus the tree-width of the graph. (ii) we develop a new algorithm for building the tree automaton for emptiness, whose complexity is in $\mathsf{ETIME}$, i.e., bounded by $(M\times T)^{3|X|+3}$ with an exponent which is a linear function of the input size (improved from EXPTIME, where the exponent is a polynomial function of the input). Thus, if the system has $1$ clock, $5$ transitions and uses a maximum constant $5$, we have only $\sim 30^6$ states. In particular, our tree-automaton is \emph{strategy-driven}, i.e.,  it manipulates only those tree terms that arise out of a winning strategy of our split-game. As a result of this strategy-guided approach, the number of states of our tree automaton is highly optimized, and an accepting run exactly corresponds to the moves in a winning strategy of our split-game.   (iii) Finally, our algorithm outputs a witness for realizability (and non-emptiness). As a proof-of-concept, we implemented our algorithm and despite the worst-case complexity, in Section \ref{sec:expt}, we discuss optimizations, results and a modeling example where our implementation performs well.

\section{Graphs for behaviors of timed systems}
\label{sec-prelim}

We fix an alphabet $\Sigma$ and use $\Sigma_{\varepsilon}$ to denote $\Sigma
\cup \{\varepsilon\}$, where $\varepsilon$ is the silent action.  We also fix a
finite set of intervals $\cI$ with bounds in $\mathbb{N}\cup\{\infty\}$. For a set $S$, we use ${\leq}\subseteq {S\times S}$ to denote a partial or total
order on $S$.  For any $x,y\in S$, we write $x<y$ if $x\leq y$ and $x\neq y$,
and $x\lessdot y$ if $x<y$ and there does not exist $z\in S$ such that $x<z<y$.

\subsection{Abstractions of timed behaviors}
\label{tcwords}
\begin{definition}
\label{def:tcw}
  A \emph{word with timing constraints} (\TCW) over $(\Sigma,\cI)$ is a
  structure \\ $\tcw{=}(V,\procrel,\lambda,(\curvearrowright^I)_{I\in\cI})$ where
  $V$ is a finite set of vertices or positions, $\lambda\colon V\to
  \Sigma_{\varepsilon}$ labels each position, the reflexive transitive closure
  ${\leq}={\procrel}^*$ is a total order on $V$ and ${\procrel}={\lessdot}$ is
  the successor relation, while ${\curvearrowright}^I\subseteq{<}$ connects
  pairs of positions carrying a timing constraint, given by an interval in $I
  \in \cI$.
  A \TCW $\tcw{=}(V,\procrel,\lambda,(\curvearrowright^I)_{I\in\cI})$ is
called \emph{realizable} if there exists a timestamp map $\ts \colon V\to\mathbb{R}_+$ such that $\ts(i)\leq\ts(j)$ for all $i\leq j$ (time is non-decreasing) and
$\ts(j)-\ts(i)\in I$ for all $i\matchrel^{I}j$ (timing constraints are
satisfied).
\end{definition} 

An example of a \TCW is given in Figure~\ref{fig2} (right), with positions $0, 1, 2, 3$ labelled by $\Sigma=\{a,b,c\}$. Curved edges decorated with intervals connect positions related by $\curvearrowright^I$, while straight edges define the successor relation $\procrel$. This \TCW is realizable by the sequence of timestamps $0,0.9,2.89,3.1$ but not by $0, 0.9, 2.99, 3.1$.  We let $\Real(\Sigma,\cI)$ be the set of \TCWs over $(\Sigma,\cI)$ which are realizable.

\subsection{TPDA and their semantics as \TCWs}
\label{sec:tpda-sem}  Dense-timed pushdown automata (TPDA), introduced in \cite{lics12}, are an extension of timed automata, and operate on a finite set of real-valued clocks
and a stack which holds symbols with their ages.  The age of a symbol 
represents the time elapsed since it was pushed onto the stack.
Formally, a TPDA $\Sys$ is a tuple $(S, s_{0}, \Sigma, \stackalphabet, \Delta, X, F)$ where $S$ is a finite set of states, $s_{0} \in S$ is the initial state,
$\Sigma$, $\stackalphabet$, are respectively finite sets of input, stack symbols, $\Delta$ is a finite set of transitions, $X$ is a finite set of real-valued
variables called clocks, $F\subseteq S$ are final states.  A transition $t \in
\Delta$ is a tuple $(s, \gamma, a, \op, R, s')$ where $s, s' \in S$, $a\in
\Sigma$, $\gamma$ is a finite conjunction of atomic formulae of the kind $x\in
I$ for $x \in X$ and $I \in \mathcal{I}$, $R \subseteq X$ are the clocks reset,
$\op$ is one of the following stack operations:
\begin{enumerate}
  \item \nop does not change the contents of the stack,
  \item $\push_c$, $c \in \stackalphabet$ is a push operation that adds $c$ on top of the stack, with age 0.
  \item $\pop^I_c$, $c \in \stackalphabet$ is a stack symbol and $I \in \mathcal{I}$ is an interval, is a pop operation 
  that removes the top most symbol of the stack provided it is a $c$ with age in the interval $I$. 
\end{enumerate}
Timed automata (TA) can be seen as TPDA using \nop operations only.  This
definition of TPDA is equivalent to the one in~\cite{lics12}, but allows
checking conjunctive constraints and stack operations together.  In \cite{CL15},
it is shown that TPDA of~\cite{lics12} are expressively equivalent to timed
automata with an untimed stack.  As our technique is oblivious to whether the stack is timed or not, we focus on the syntactically more succinct model TPDA with a timed stack. 

Next, we define the semantics of a TPDA in terms of \TCWs.
\begin{definition}
\label{def:semantics}
 A \TCW
$\tcw=(V,\procrel,\lambda,(\curvearrowright^I)_{I\in\cI})$ is said to be \emph{generated or accepted} by a TPDA $\Sys$ if there is an accepting abstract run
$\rho=(s_0,\gamma_1,a_1,\op_1,R_1,s_1)$ $(s_1,\gamma_2,a_2,\op_2,R_2,s_2)\cdots$
$(s_{n-1},\gamma_n,a_n,\op_n,R_n,s_n)$ of $\Sys$ such that, $s_n\in F$ and
\begin{itemize}
  \item the sequence of push-pop operations is well-nested: in each prefix
  $\op_1\cdots\op_k$ with $1\leq k\leq n$, number of pops is at most number of
  pushes, and in the full sequence $\op_1\cdots\op_n$, they are equal; and
  
  \item $V=\{0,1,\ldots,n\}$ with $\lambda(0)=\varepsilon$ and 
  $\lambda(i)=a_i$ for all $1\leq i\leq n$ and $0\procrel 1\procrel \cdots 
  \procrel n$ and, for all $I\in\cI$, the relation $\matchrel^{I}$ is the set of 
  pairs $(i,j)$ with $0\leq i<j\leq n$ such that
  \begin{itemize}
    \item either for some $x\in X$ we have $x\in R_i$ (assuming $R_0=X$) and
    $x\in I$ is a conjunct of $\gamma_j$ and $x\notin R_k$ for all $i<k<j$,
  
    \item or $\op_i={\push}_b$ is a push and $\op_j={\pop}_b^{I}$ is the
    matching pop (same number of pushes and pops in $\op_{i+1}\cdots\op_{j-1}$).
  \end{itemize}
\end{itemize}
\end{definition}
We denote by $\TCW(\Sys)$ the set of \TCWs generated by $\Sys$.
The non-emptiness  problem for the TPDA \Sys amounts to asking whether some \TCW generated by \Sys is  realizable, i.e., whether $\TCW(\Sys)\cap\Real(\Sigma, \cI) \neq\emptyset$.
The $\TCW$ semantics of timed automata (TA) can be obtained from the above
discussion by just ignoring the stack components (using \nop operations only).
Figure~\ref{fig2} depicts a simple example of a timed automaton and a $\TCW$ generated by it.
\begin{figure}[t]
\includegraphics[scale=.46,page=1]{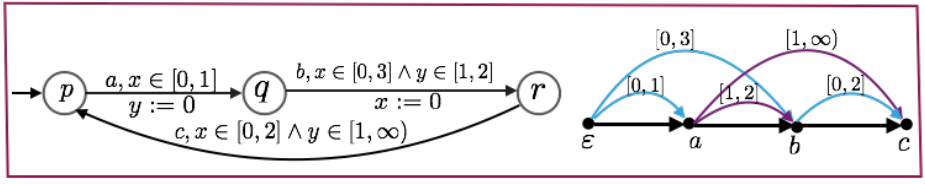}
\caption{A timed automaton and  a \TCW capturing a run}
\label{fig2}
\end{figure}

\begin{remark}
  The classical semantics of timed systems is given in terms of timed words.  A
  \emph{timed word} is a sequence $w=(a_1,t_1)\cdots(a_n,t_n)$ with
  $a_1,\ldots,a_n\in\Sigma$ and $(t_i)_{1\leq i\leq n}$ is a non-decreasing
  sequence of values in $\mathbb{R}_{+}$.  A \emph{realization} of a \TCW
  $\tcw=(V,\procrel,\lambda,(\curvearrowright^I)_{I\in\cI})\in\TCW(\Sys)$ with
  $V=\{0,1,\ldots,n\}$ is a timed word
  $w=(\lambda(1),\ts(1))\ldots(\lambda(n),\ts(n))$ where the timestamp map
  $\ts\colon V\to\mathbb{R}_{+}$ (with $\ts(0)=0)$ is non decreasing and
  satisfies all timing constraints of $\tcw$.  For example, the timed word
  $(a,0.9)(b,2.89)(c,3.1)$ is a realization of the \TCW in Figure~\ref{fig2}
  while $(a,0.9)(b,2.99)(c,3.1)$ is not. It is not difficult to check that the 
  language $\cL(\Sys)$ of timed words accepted by \Sys with the classical 
  semantics is precisely the set of realizations of \TCWs in $\TCW(\Sys)$.
  Therefore, $\cL(\Sys)=\emptyset$ iff $\TCW(\Sys)\cap\Real(\Sigma, \cI)=\emptyset$.
\end{remark}

We now identify some important properties satisfied by \TCWs generated from a
TPDA. Let $\tcw=(V,\procrel,\lambda,(\curvearrowright^{I})_{I\in\cI})$ be a
\TCW.
The matching relation $(\curvearrowright^{I})_{I\in\cI}$ is used in two 
contexts: (i) while connecting a clock reset point (say for clock $x$) to a
point where a guard of the form $x\in I$ is checked, and (ii) while connecting a
point where a push was made to its corresponding pop, where the age of the
topmost stack symbol is checked to be in interval $I$.
We use the notations  $\curvearrowright^{x\in I}$ 
and $\curvearrowright^{s\in I}$ to 
denote the matching relation  $\curvearrowright^{I}$ 
 corresponding to  a clock-reset-check as well as 
 push on stack-check respectively.
We say that \tcw is \emph{well timed} w.r.t.\ a set of clocks $X$ and a stack
$s$ if for each interval $I\in\cI$ the matching relation $\curvearrowright^{I}$
can be partitioned as ${\curvearrowright}^{I}={\curvearrowright}^{s\in I}\uplus
\biguplus_{x\in X}{\curvearrowright}^{x\in I}$ where
\begin{enumerate}[label=$(\mathsf{T}_{\arabic*})$,ref=$\mathsf{T}_{\arabic*}$]
  \item\label{item:T1}
  the stack relation ${\curvearrowright}^s=\bigcup_{I\in\cI}{\matchrel}^{s\in I}$
  corresponds to the matching push-pop events, hence it is well-nested: for all
  $i\curvearrowright^s j$ and $i'\curvearrowright^s j'$, if $i<i'<j$ then
  $j'<j$.
  
  \item\label{item:T2} For each $x\in X$, the clock relation
  ${\curvearrowright}^x=\bigcup_{I\in\cI}{\matchrel}^{x\in I}$ corresponds to
  the timing constraints for clock $x$ and respects the last reset condition:
  for all $i\curvearrowright^x j$ and $i'\curvearrowright^x j'$, if $i<i'$, then
  $j \leq i'$.  See Figure \ref{fig2} for example, where $0\curvearrowright^x
  2$ and $2 \curvearrowright^x 3$.
\end{enumerate}
It is then easy to check that \TCWs defined by a TPDA with set of clocks $X$
are well-timed for the set of clocks $X$, i.e., satisfy the
properties above.  We
obtain the same for TA by just ignoring the stack edges, i.e., (\ref{item:T1})
above.

\section{Tree-Width for Timed Systems}
\label{sec:stw-tpda}
In this section, we discuss tree-algebra by introducing the basic terms, the
operations on terms, their syntax and semantics.  This will help us in analyzing
the graphs obtained in the previous section using tree-terms, and establishing a
bound on the tree-width.  
\begin{figure}[t]
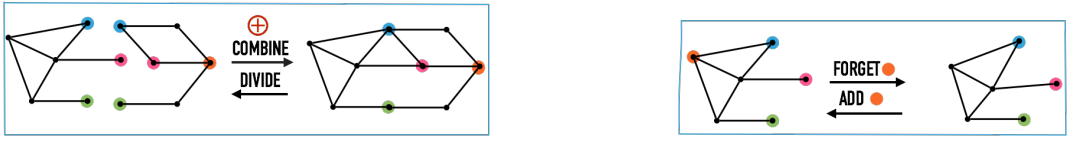

\includegraphics[scale=.22,page=5]{all-figs}
\hfill
\includegraphics[scale=.22,page=6]{all-figs}
\caption{Operations on colored graphs.}
\protect\label{fig:combine}
\end{figure}
We introduce tree terms $\TTs$ from Courcelle
\cite{CourcelleBook} and their semantics as graphs
which are both vertex-labeled
and edge-labeled.  Let $\Sigma$ be a set of vertex labels and let $\Xi$ be a set of edge
labels. 
Let $K \in \mathbb{N}$.  
 The syntax of $K$-tree terms \kTTs over $(\Sigma,\Xi)$
is given by 

\noindent\hfil$
\stt ::= (a,i) \mid {(a,i)}\xi{(b,j)}\mid \forget{i}\stt
\mid \rename{i}{j}\stt \mid \stt \oplus \stt
$

\noindent%
where $i,j\in\{1,2,\dots,K\}$ are colors ($i \neq j$), $a,b\in\Sigma$ are
vertex labels and $\xi\in\Xi$ is an edge label.  The semantics of a \kTT $\stt$
is a colored graph $\sem\stt=(G_\stt,\chi_\stt)$ where $G_\stt=(V,E)$ is a graph and
$\chi_\stt\colon \{1,2,\dots,K\}\to V$ is a partial injective function assigning a color to
some vertices of $G_\stt$. Note that any color in $\{1,2,\dots,K\}$ is assigned 
to at most one vertex of $G_\stt$. 

The atomic term $(a,i)$ is a single vertex colored $i$ and labeled $a$ and the
atomic term ${(a,i)}\xi{(b,j)}$ represents a $\xi$-labeled edge
between two vertices colored $i, j$ and labeled $a,b$ respectively.  Given a
tree term $\tau$, $\forget{i}(\tau)$ forgets the color $i$ from a
node colored $i$, leaving it uncolored.  The operation $\rename{i}{j}(\stt)$
renames the color $i$ of a node to color $j$, provided no nodes are already
colored $j$. Since any color appears at most once 
in $G_\stt$, the operations  $\forget{i}(\tau)$ and $\rename{i}{j}(\stt)$
are  deterministic, when  colors $i, j$, are fixed. 
  Finally, the operation $\stt_1 \oplus \stt_2$ (read as combine)
combines two terms $\stt_1, \stt_2$ by fusing the nodes of $\stt_1, \stt_2$
which have the same color. 
See Figure~\ref{fig:combine}.

The tree-width of a graph $G$ is defined as the least $K$ such that $G=G_{\tau}$
for some \TT $\tau$ using $K+1$ colors. Let $\mathsf{TW}_K$ denote the set of all graphs having tree width at most $K$.
For \TCWs, we have successor edges $\rightarrow$ and matching edges
$\curvearrowright^{I}$ where $I\in\cI$ is an interval.  Hence, the set of edge
labels is $\Xi_\cI=\{{\procrel}\}\cup\{{\matchrel^{I}}\mid I\in\cI\}$ and we 
use \TTs over $(\Sigma,\Xi_\cI)$. An example is given in Appendix~\ref{app:TT-example}.

\subsection{TCWs and Games}\label{sec:game} 
We find it convenient to prove that \TCWs have bounded tree-width by playing a
game, whose game positions are \TCWs in which some successor edges may have been
cut, i.e., are missing.  Such \TCWs, where some successor edges may be missing, are called 
split-\TCWs. A split-\TCW which is a  connected graph is called a connected split-\TCW, while 
 a split-\TCW which is a disconnected graph, is called a disconnected  
split-\TCW.  For example, \includegraphics[scale=.25,page=38]{all-figs} is a connected split-\TCW, while 
\includegraphics[scale=.25,page=39]{all-figs} is a disconnected split-\TCW 
consisting of two connected split \TCWs, namely \includegraphics[scale=.25,page=40]{all-figs}
and \includegraphics[scale=.25,page=41]{all-figs}.

A \TCW is atomic if it is denoted by an atomic term ($(a,i)$ or $(a,i)
\rightarrow (b,j)$ or $(a,i) \curvearrowright^{I} (b,j)$).
\emph{The split-game} is a two player turn based game
$\mathcal{G}=(\mathsf{Pos}_\exists\uplus \mathsf{Pos}_\forall,\mathsf{Moves})$ where Eve's set of game positions
$\mathsf{Pos}_\exists$ consists of all connected (wrt.\ ${\procrel}\cup{\curvearrowright}$)
split-\TCWs and Adam's set of game positions $\mathsf{Pos}_\forall$ consists of dis-connected
split-\TCWs. 
Eve's moves consist of adding colors to the vertices of the split-\TCW, and 
dividing the split-\TCW. For example, if we have the connected split-\TCW 
\includegraphics[scale=.25,page=38]{all-figs}, and Eve colors two nodes (we use shapes 
in place of colors for better visibility) 
we obtain \includegraphics[scale=.25,page=42]{all-figs}. 
This graph can be  divided  obtaining the  disconnected graph \includegraphics[scale=.25,page=43]{all-figs} and 
\includegraphics[scale=.25,page=44]{all-figs}. 
 As a result, we obtain the connected parts  \includegraphics[scale=.25,page=45]{all-figs} and \includegraphics[scale=.25,page=46]{all-figs} and 
  \includegraphics[scale=.25,page=44]{all-figs}. 
  Now Adam's choices are on this disconnected split-\TCW 
and he can choose either of the above three connected split-\TCWs
to continue the game. 
Thus,  divide is the reverse of the combine operation $\oplus$.   
  Adam's moves amount to choosing a connected component of the
split-\TCW. Eve has to continue coloring and dividing on the 
connected split-\TCW chosen by Adam. 
Atomic split-\TCWs are terminal positions in the game: neither Eve
nor Adam can move from an atomic split-\TCW.
A play on a split-\TCW $\tcw$ is a path in $\mathcal{G}$ starting from
$\tcw$ and leading to an atomic split-\TCW. The cost of the play is
the maximum width (number of colors-1) of any split-\TCW encountered in the path. 
In our example above, \includegraphics[scale=.25,page=44]{all-figs} is already an atomic 
split-\TCW.
If Adam chooses any of the other two, it is easy to see that Eve has a strategy
using at most 2 colors in any of the split-\TCWs that will be obtained till
termination.
The \emph{cost} of a strategy $\sigma$ for Eve from a split-\TCW $\tcw$ is the maximal cost of the
plays starting from \tcw and following strategy $\sigma$.
The \emph{tree-width} of a (split-)\TCW $\tcw$ is the minimal cost
of Eve's (positional) strategies starting from \tcw.
Let $\TCW_K$  denote the set of \TCWs with tree-width
bounded by $K$. 

A \emph{block} in a split-\TCW is a maximal set of
points of $V$ connected by $\procrel$.  For example, the split-\TCW
\includegraphics[scale=.25,page=21]{all-figs}
has one non-trivial block
\includegraphics[scale=.25,page=14]{all-figs}
and one trivial block \includegraphics[scale=.25,page=13]{all-figs}.
Points that are not left or right endpoints of blocks of \tcw are called
internal.

\mypara{The Bound} %
We show that we can find a $K$ such that all the 
behaviors of the given timed system have tree-width bounded by $K$.
\begin{theorem}\label{thm:sw-tpda}
  Given a timed system \Sys using a set of clocks $X$, all graphs in its  \TCW
  language have tree-width bounded by $K$, i.e., $\TCW(\Sys)\subseteq \TCW_K$,
  where
  \begin{enumerate}[nosep]
    \item\label{item:TA-sw} $K=|X|+1$ if \Sys is a timed automaton,
    \item\label{item:TPDA-sw} $K=3|X|+2$ if \Sys is a timed pushdown automaton.
  \end{enumerate}
\end{theorem}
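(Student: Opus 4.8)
\emph{Proof plan.}
The plan is to establish both bounds by exhibiting, for every $\tcw\in\TCW(\Sys)$, a positional strategy for Eve in the split-game all of whose plays have cost (maximum width, i.e.\ number of colours minus one, of any split-\TCW met along the play) at most $K$; by the game characterisation of tree-width this yields $\TCW(\Sys)\subseteq\TCW_K$. The only facts about $\Sys$ I would feed into the strategy are the two well-timedness properties: the last-reset condition \ref{item:T2} for the clock edges $\curvearrowright^x$, and the well-nesting \ref{item:T1} of the stack edges $\curvearrowright^s$. Showing a strategy of cost $\le K$ amounts to keeping the number of simultaneously coloured vertices at most $K+1$ throughout every play, on both branches Adam may choose.

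For the timed-automaton case~\ref{item:TA-sw} I would let Eve peel the split-\TCW from its right end. At the step dealing with the current rightmost vertex $k$ she colours $k$ and its predecessor $k-1$ (detaching the successor edge $(k-1)\procrel k$ as an atomic piece), and, for every clock $x$, she colours the last reset $i$ of $x$ before $k$ whenever a guard $x\in I$ is checked at $k$, detaching each edge $i\curvearrowright^x k$ as an atomic piece. The crucial observation is precisely \ref{item:T2}: a single cut point is crossed by at most one $\curvearrowright^x$-source per clock, since if $i_1<i_2$ were two resets of $x$ each feeding a check beyond the cut, the last-reset condition would force the first check to precede $i_2$, contradicting that it lies beyond the cut. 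Hence the coloured set is contained in $\{k,k-1\}\cup\{\text{one active reset per clock}\}$, of size at most $|X|+2$, so the width never exceeds $|X|+1$. This is exactly the linear tree term read off the path decomposition with bags $B_j=\{j,j+1\}\cup\{\mathrm{lastreset}_x(j):x\in X\}$.

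For the TPDA case~\ref{item:TPDA-sw} I would keep the same treatment of clock edges but drive the decomposition by the nesting of the stack rather than by naive right-peeling, since a pop may be matched to a far-away push with arbitrarily deep nesting in between. The key discipline is that Eve only ever cuts at \emph{top-level} stack boundaries of the current contiguous factor: she either peels the outer matched pair when the factor has the form $\push\cdots\pop$, or cuts the single successor edge separating two consecutive top-level blocks. With this discipline \emph{no stack edge ever crosses a cut} — each $\curvearrowright^s$ edge is internal to one side or is the peeled pair itself — so by \ref{item:T1} the stack contributes only the detached pair and the block endpoints to the interface. It then remains to superimpose the clock-edge treatment of the automaton case at each of the boundedly many active block boundaries, each clock contributing a controlled number of reset/check vertices by \ref{item:T2}; the aim is to show that a careful count of resets and pending checks straddling the active boundaries closes at $3|X|+3$ colours, i.e.\ width $3|X|+2$.

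The main obstacle I anticipate is exactly this simultaneous interface count in the pushdown case, and specifically the control of clock edges: a single reset in one block may be checked several times in the other block, so all of those edges cross the cut at once, yet they must be fed to $\oplus$ one at a time. Eve must therefore order the detachment of these edges so that at no instant more than a bounded number of reset and check vertices per clock are coloured simultaneously. Getting the bookkeeping to close at exactly $3|X|+2$ rather than a looser constant is the delicate point; the argument should hinge on showing that the top-level-cut discipline keeps at most the reset, one pending check, and one matching block-boundary vertex alive per clock, with the additive $+2$ absorbing the two endpoints of the block currently being split.
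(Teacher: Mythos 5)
Your treatment of the timed-automaton case is essentially the paper's own argument (right-to-left peeling, one live reset per clock justified by \ref{item:T2}, hence $|X|+2$ colours and width $|X|+1$), and it is correct. The TPDA case, however, is where the theorem's real content lies, and there your proposal has a genuine gap: you switch to a top-down decomposition driven by the top-level stack structure, observe correctly via \ref{item:T1} that no stack edge then crosses a cut, and then explicitly defer the one thing that actually needs proving --- that the number of simultaneously coloured vertices forced by the clock edges stays at $3|X|+3$. You state the invariant you would like (``at most the reset, one pending check, and one matching block-boundary vertex alive per clock'') but give no argument that your strategy preserves it, and no description of the shape of the split-\TCWs reachable under your strategy on which such an invariant could even be stated. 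In particular, a single reset to the left of a cut may be the source of many $\curvearrowright^x$ edges whose targets are scattered through the other side; detaching these at the moment of the cut, as your discipline requires, means colouring each target in turn, dividing off an atomic edge, and forgetting --- and you have not checked that each intermediate position is a legal move of Eve (each divide must be a genuine decomposition along coloured shared vertices) nor that the transient colours do not stack up across the several clocks and the several nested boundaries that are simultaneously active.

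The paper closes exactly this hole by a different device: it never detaches a clock edge at the moment a stack edge is split. Instead, when the rightmost point is a pop whose matching push is internal, Eve colours the push, colours the (at most one per clock, by \ref{item:T2}) reset points whose edges cross the stack edge, and splits so that those resets are \emph{carried into the right component as hanging trivial blocks}; each clock edge is only ever removed later, when its target becomes the rightmost point. This makes the reachable positions describable by the invariants (I1)--(I3) --- at most $|X|$ hanging trivial blocks, at most $|X|$ coloured last-reset points on the unique non-trivial block, and $2$ coloured block endpoints --- giving $2|X|+2$ persistent colours, with $|X|+1$ more used transiently during a split, hence $3|X|+3$ colours and width $3|X|+2$. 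Your route may well be completable, but as written the counting argument that is the heart of part~\ref{item:TPDA-sw} is announced rather than carried out, so the bound $3|X|+2$ is not established.
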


The following lemma completes the proof of
Theorem~\ref{thm:sw-tpda}~\eqref{item:TPDA-sw}.
\begin{lemma}\label{TPDA-bound}
  The tree-width of a well-timed  \TCW is bounded by $3|X|+2$.
\end{lemma}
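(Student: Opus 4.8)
The plan is to exhibit a strategy for Eve in the split-game whose cost is at most $3|X|+2$, that is, a strategy under which no connected split-\TCW reached in a play ever carries more than $3|X|+3$ colored vertices. Since the tree-width of a \TCW is by definition the minimal cost of Eve's (positional) strategies, this suffices. I would organize the strategy around an \emph{interface invariant}: on every connected split-\TCW Eve is handed, the colored vertices (equivalently, the block endpoints that still have to be re-glued by $\oplus$) are accounted for by a budget of $3|X|+3$ colors, split into a constant part controlling the block and stack structure and a per-clock part controlling the timing constraints.

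The first ingredient follows the nested structure guaranteed by \ref{item:T1}. On the connected split-\TCW she currently holds, Eve performs one of two elementary moves. Either (i) some successor edge $i\procrel i{+}1$ is \emph{detachable}, meaning that cutting it isolates an atom or disconnects the graph into connected split-\TCWs, in which case she colors its two endpoints and divides there; or (ii) she isolates the outermost stack edge: by well-nestedness there is a matching push--pop pair $p\matchrel^{s}q$ with $p$ and $q$ the endpoints of the current working region, and she splits off the atom $(a,p)\matchrel^{s}(b,q)$, separating the \emph{inside} region strictly between $p$ and $q$ from the \emph{outside}. This is the standard outside-in peeling of a well-nested word, and on its own it costs only a constant number of colors, namely the endpoints of the current region and of the push--pop edge.

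The more delicate ingredient is the treatment of the clock edges $\matchrel^{x}$: whenever both endpoints of a clock edge $i\matchrel^{x}j$ have become block endpoints, Eve peels it off as an atom $(a,i)\matchrel^{x\in I}(b,j)$. The crux is to bound how many reset points of a single clock must stay colored at once. For a timed automaton this is immediate: scanning purely left to right, the last-reset property \ref{item:T2} guarantees that all checks of a reset point $i$ precede the next reset $i'$ of the same clock, so at most \emph{one} reset per clock is ever pending, giving the bound $|X|+1$ for timed automata (Theorem~\ref{thm:sw-tpda}\eqref{item:TA-sw}). With a stack the word is no longer scanned monotonically: when Eve splits at a push--pop pair $p\matchrel^{s}q$, the reset structure of a clock $x$ may straddle the cut, a reset located before $p$ having checks both inside $(p,q)$ and after $q$. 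Separating inside from outside then forces such a reset point to stay colored in both parts.

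This straddling is precisely where the factor three originates, and bounding it is the main obstacle. Combining \ref{item:T1} with \ref{item:T2}, I would argue that, for each clock $x$ and across any single stack cut, at most three reset points are simultaneously active: one pending reset inherited from the outside (to the left of $p$), one pending reset created inside $(p,q)$, and the reset at the current frontier; every other clock edge lies entirely inside or entirely outside the cut and is handled recursively without enlarging the interface. Charging three colors per clock for these active resets, plus a constant of three colors for the endpoints of the working region and of the push--pop edge being processed, keeps the interface within $3|X|+3$ colors throughout the play. As each move strictly shrinks the connected split-\TCW Eve is handed, every play terminates in an atomic split-\TCW, so the strategy is winning with cost $3|X|+2$. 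The technical heart, and where \ref{item:T2} must be used most carefully, is to verify that the two moves are exhaustive and that the active-reset count is never exceeded — in particular that clock edges crossing several nested push--pop pairs do not accumulate beyond the three slots charged per clock.
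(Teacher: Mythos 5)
You set up the right framework (the split-game, an interface invariant, a budget of $3|X|+3$ colors split into $3$ per clock plus a constant), but the decomposition you choose and the point where you stop are exactly where the lemma's difficulty lives, so the proof is not yet there. The paper's strategy is a strict \emph{right-to-left} peeling: Eve always works at the current rightmost point, which is therefore already colored; if that point is the target of a clock edge she just deletes the edge (both ends colored); if it is the target of a stack edge with internal source she cuts at the \emph{single} position of the push, sending to the right part the segment from push to pop together with the hanging reset points of the (at most $|X|$, by \ref{item:T2}) clock edges that cross that one position. This guarantees two things your sketch does not: (a) the target of every matching edge being detached is the frontier point, so it never costs a fresh color, and (b) every cut is a one-position cut, so at most one clock edge per clock crosses it, and each resulting component again has exactly one non-trivial block with at most $|X|$ hanging trivial blocks to its left.

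Your outside-in move at the outermost push--pop pair $p\matchrel^{s}q$ is a \emph{two}-position cut $\{p,q\}$, and the accounting you defer as ``the technical heart'' genuinely fails to be routine there. Clock edges can cross this cut in three ways (reset before $p$ checked inside $(p,q)$; reset inside checked after $q$; reset before $p$ checked after $q$), and crucially the targets of the edges crossing at $q$ sit at arbitrary positions in the suffix, not at an already-colored frontier, so detaching them costs extra colors for the targets that your ``three slots per clock'' do not cover; moreover the outside component acquires a hole between $p$ and $q$, and iterating the move on the outside multiplies the number of blocks whose endpoints must all remain colored, with no argument that this stays below the budget. So either you must prove a substantially stronger invariant for the inside/outside decomposition (controlling both directions of crossing edges and the growing set of holes), or switch to the paper's one-sided peeling, where invariants (I1)--(I3) close up after every move and the count $2|X|+2$ plus $|X|+1$ transient colors is immediate. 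As stated, the claim that at most three reset points per clock are ever simultaneously active is an assertion, not a proof, and it is the whole lemma.
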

We prove this by playing the ``split game'' between \emph{Adam} and \emph{Eve}
in which \emph{Eve} has a strategy to disconnect the word without introducing
more than $3|X|+3$ colors.  \emph{Eve}'s strategy processes the word from right
to left.  Starting from any \TCW, Eve colors the end points of the \TCW, as well
as the last reset points (from the right end) corresponding to each clock.  Here
she uses at most $|X|+2$ colors.  On top of this, depending on the last point,
we have different cases. A detailed proof can be seen in Appendix \ref{app:sw}, while we give a sketch here. 

If the last point is the target of a  $\curvearrowright^{x}$  edge for some clock
 $x$, then Eve simply removes the clock edge, since both the source and target points 
  of this edge are colored. 
We only discuss in some detail the case when the last point is the target of a $\curvearrowright^{s}$  edge, and the source 
of this edge is an internal point in the non-trivial block. 
Figure \ref{game-main} illustrates this case.  

To keep a bound on the number of colors needed, Eve divides the \TCW as follows:
\begin{itemize}
\item First Eve adds a color to the  source of the stack edge
\item If there are any  clock edges crossing this stack edge,  
 Eve adds colors to the corresponding reset points. Note that this results in adding atmost $|X|$ colors.
\item Eve disconnects the \TCW into two parts, such that the right part $\tcw_2$
consists of one non-trivial block whose end points are the source and target points of the stack edge, 
and also contains to the left of this block, atmost $|X|$ trivial blocks. 
Each of these trivial blocks are 
the reset points of those clock edges which cross over. 
 The left part $\tcw_1$ is a \TCW consisting of all points to the left 
of the source of the stack edge, and has all remaining edges other than the clock edges which have crossed over.     
Adam can now continue the game choosing $\tcw_1$ or $\tcw_2$.  Note that in one of the words so obtained, the stack edge completely spans the non-trivial block, and 
can be easily removed.   
\end{itemize}

\begin{wrapfigure}[11]{o}[0pt]{0.5\textwidth}
  \includegraphics[scale=.22,page=22]{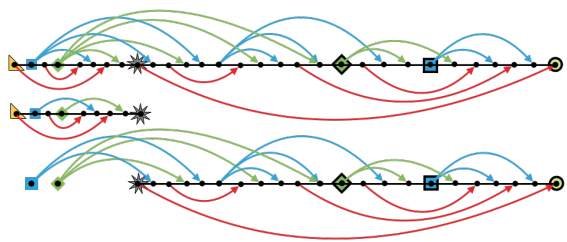}
  \caption{The last point is the target of a $\curvearrowright^s$ (top figure). After the split, we obtain 
  the words $\tcw_1$ (the middle one) and $\tcw_2$ (the bottom one).}
  \label{game-main}
\end{wrapfigure}

\noindent{\emph{Invariants and bound on tree-width}}.  We now discuss some
invariants on the structure of the split-\TCWs as we play the game using the
above strategy.  
\begin{itemize}
\item[(I1)] We have  $\leq |X|$ colored trivial blocks to the left of the only non-trivial block,  
\item[(I2)] The last reset node of each clock on the non-trivial block is colored,
\item[(I3)] The end points of the non-trivial block are colored.
\end{itemize}
To maintain the above invariants, we need $|X|+1$ extra colors than the at most
$2|X|+2$ mentioned above.  This proves that the tree-width of a TPDA with set of
clocks $X$ is bounded by $3|X|+2$. If the underlying system is a timed automaton, then 
we have a single non-trivial block in the game at any point of time. There are no 
 trivial blocks, unlike the TPDA, due to the absence  of stack edges. This results in using only $\leq |X|+2$ colors at any point of time, 
where $|X|$ colors are needed to color the last reset points of the clocks in the block, and the remaining two colors are used to color the right and left end points  
 of the block.

\begin{gpicture}[name=tau1,ignore]
  \gasset{Nframe=n,Nw=2,Nh=2,Nfill=y}
  \node[Nfill=n](0)(3,0){}
  \nodelabel[ExtNL=y,NLdist=1,NLangle=-90](0){$\chi$}
  \nodelabel[ExtNL=y,NLdist=5,NLangle=-90](0){$\ts$}
  \node(1)(08,0){}
  \nodelabel[ExtNL=y,NLdist=1,NLangle=-90](1){1}
  \nodelabel[ExtNL=y,NLdist=5,NLangle=-90](1){0}
  \node(3)(16,0){}
  \nodelabel[ExtNL=y,NLdist=1,NLangle=-90](3){3}
  \nodelabel[ExtNL=y,NLdist=5,NLangle=-90](3){5}
  \node(4)(24,0){}
  \nodelabel[ExtNL=y,NLdist=1,NLangle=-90](4){4}
  \nodelabel[ExtNL=y,NLdist=5,NLangle=-90](4){6}
  \node(5)(32,0){}
  \nodelabel[ExtNL=y,NLdist=1,NLangle=-90](5){5}
  \nodelabel[ExtNL=y,NLdist=5,NLangle=-90](5){8}
  \drawedge[dash={1}0](1,3){}
  \drawedge(3,4){}\drawedge(4,5){}
  \drawedge[curvedepth=4,ELpos=45,ELdist=0](1,4){\tiny$[2,\infty]$}
  \drawedge[curvedepth=8,ELpos=50,ELdist=0](1,5){\tiny$[3,\infty]$}
  \drawedge[curvedepth=3,ELpos=40,ELdist=0](3,5){\tiny$[1,3]$}
\end{gpicture}

\begin{gpicture}[name=state1,ignore]
  \gasset{Nframe=n,Nw=2.5,Nh=3.5,Nadjustdist=0.5}
  \node(0)(3,0){$P$}
  \nodelabel[ExtNL=y,NLdist=1,NLangle=-90](0){$\tsm$}
  \node(1)(08,0){1}
  \nodelabel[ExtNL=y,NLdist=1,NLangle=-90](1){0}
  \node[Nframe=y,Nw=3.5,Nh=3.5](3)(16,0){3}
  \nodelabel[ExtNL=y,NLdist=1,NLangle=-90](3){1}
  \node(4)(24,0){4}
  \nodelabel[ExtNL=y,NLdist=1,NLangle=-90](4){2}
  \node(5)(32,0){5}
  \nodelabel[ExtNL=y,NLdist=1,NLangle=-90](5){0}
  \drawedge[linegray=1,ELside=r,ELdist=3](1,3){$\scriptstyle\neg\acc$}
  \drawedge[linegray=1,ELside=r,ELdist=3](3,4){$\scriptstyle\acc$}
  \drawedge[linegray=1,ELside=r,ELdist=3](4,5){$\scriptstyle\acc$}
\end{gpicture}

\begin{gpicture}[name=tau2,ignore]
  \gasset{Nframe=n,Nw=2,Nh=2,Nfill=y}
  \node[Nfill=n](0)(11,0){}
  \nodelabel[ExtNL=y,NLdist=1,NLangle=-90](0){$\chi$}
  \nodelabel[ExtNL=y,NLdist=5,NLangle=-90](0){$\ts$}
  \node(2)(16,0){}
  \nodelabel[ExtNL=y,NLdist=1,NLangle=-90](2){2}
  \nodelabel[ExtNL=y,NLdist=5,NLangle=-90](2){3}
  \node(4)(24,0){}
  \nodelabel[ExtNL=y,NLdist=1,NLangle=-90](4){3}
  \nodelabel[ExtNL=y,NLdist=5,NLangle=-90](4){6}
  \node(5)(32,0){}
  \nodelabel[ExtNL=y,NLdist=1,NLangle=-90](5){4}
  \nodelabel[ExtNL=y,NLdist=5,NLangle=-90](5){8}
  \node(6)(40,0){}
  \nodelabel[ExtNL=y,NLdist=1,NLangle=-90](6){5}
  \nodelabel[ExtNL=y,NLdist=5,NLangle=-90](6){8}
  \node(7)(48,0){}
  \nodelabel[ExtNL=y,NLdist=1,NLangle=-90](7){6}
  \nodelabel[ExtNL=y,NLdist=5,NLangle=-90](7){11}
  \drawedge[dash={1}0](2,4){}\drawedge[dash={1}0](4,5){}
  \drawedge(5,6){}\drawedge(6,7){}
  \drawedge[curvedepth=8,ELpos=50,ELdist=0](2,7){\tiny$[3,\infty]$}
  \drawedge[curvedepth=3,ELpos=50,ELdist=0](4,6){\tiny$[0,2]$}
  \drawedge[curvedepth=3,ELpos=45,ELdist=0](5,7){\tiny$[1,3]$}
\end{gpicture}

\begin{gpicture}[name=state2,ignore]
  \gasset{Nframe=n,Nw=2.5,Nh=3.5,Nadjustdist=0.5}
  \node(0)(11,0){$P$}
  \nodelabel[ExtNL=y,NLdist=1,NLangle=-90](0){$\tsm$}
  \node(2)(16,0){2}
  \nodelabel[ExtNL=y,NLdist=1,NLangle=-90](2){3}
  \node(4)(24,0){3}
  \nodelabel[ExtNL=y,NLdist=1,NLangle=-90](4){2}
  \node[Nframe=y,Nw=3.5,Nh=3.5](5)(32,0){4}
  \nodelabel[ExtNL=y,NLdist=1,NLangle=-90](5){0}
  \node(6)(40,0){5}
  \nodelabel[ExtNL=y,NLdist=1,NLangle=-90](6){0}
  \node(7)(48,0){6}
  \nodelabel[ExtNL=y,NLdist=1,NLangle=-90](7){3}
  \drawedge[linegray=1,ELside=r,ELdist=3](2,4){$\scriptstyle\acc$}
  \drawedge[linegray=1,ELside=r,ELdist=3](4,5){$\scriptstyle\acc$}
  \drawedge[linegray=1,ELside=r,ELdist=3](5,6){$\scriptstyle\acc$}
  \drawedge[linegray=1,ELside=r,ELdist=3](6,7){$\scriptstyle\acc$}
\end{gpicture}

\begin{gpicture}[name=tau3,ignore]
  \gasset{Nframe=n,Nw=2,Nh=2,Nfill=y}
  \node[Nfill=n](0)(-5,0){}
  \nodelabel[ExtNL=y,NLdist=1,NLangle=-90](0){$\chi$}
  \nodelabel[ExtNL=y,NLdist=5,NLangle=-90](0){$\ts$}
  \node(1)(00,0){}
  \nodelabel[ExtNL=y,NLdist=1,NLangle=-90](1){1}
  \nodelabel[ExtNL=y,NLdist=5,NLangle=-90](1){0}
  \node(2)(08,0){}
  \nodelabel[ExtNL=y,NLdist=1,NLangle=-90](2){2}
  \nodelabel[ExtNL=y,NLdist=5,NLangle=-90](2){3}
  \node(3)(16,0){}
  \nodelabel[ExtNL=y,NLdist=1,NLangle=-90](3){3}
  \nodelabel[ExtNL=y,NLdist=5,NLangle=-90](3){5}
  \node(4)(24,0){}
  \nodelabel[ExtNL=y,NLdist=1,NLangle=-90](4){4}
  \nodelabel[ExtNL=y,NLdist=5,NLangle=-90](4){6}
  \node(5)(32,0){}
  \nodelabel[ExtNL=y,NLdist=1,NLangle=-90](5){}
  \nodelabel[ExtNL=y,NLdist=5,NLangle=-90](5){8}
  \node(6)(40,0){}
  \nodelabel[ExtNL=y,NLdist=1,NLangle=-90](6){}
  \nodelabel[ExtNL=y,NLdist=5,NLangle=-90](6){8}
  \node(7)(48,0){}
  \nodelabel[ExtNL=y,NLdist=1,NLangle=-90](7){6}
  \nodelabel[ExtNL=y,NLdist=5,NLangle=-90](7){11}
  \drawedge[dash={1}0](1,2){}\drawedge[dash={1}0](2,3){}
  \drawedge(3,4){}\drawedge(4,5){}\drawedge(5,6){}\drawedge(6,7){}
  \drawedge[curvedepth=4,ELpos=45,ELdist=0](1,4){\tiny$[2,\infty]$}
  \drawedge[curvedepth=8,ELpos=50,ELdist=0](1,5){\tiny$[3,\infty]$}
  \drawedge[curvedepth=8,ELpos=50,ELdist=0](2,7){\tiny$[3,\infty]$}
  \drawedge[curvedepth=3,ELpos=40,ELdist=0](3,5){\tiny$[1,3]$}
  \drawedge[curvedepth=3,ELpos=50,ELdist=0](4,6){\tiny$[0,2]$}
  \drawedge[curvedepth=3,ELpos=45,ELdist=0](5,7){\tiny$[1,3]$}
\end{gpicture}

\begin{gpicture}[name=state3,ignore]
  \gasset{Nframe=n,Nw=2.5,Nh=3.5,Nadjustdist=0.5}
  \node(0)(-5,0){$P$}
  \nodelabel[ExtNL=y,NLdist=1,NLangle=-90](0){$\tsm$}
  \node(1)(00,0){1}
  \nodelabel[ExtNL=y,NLdist=1,NLangle=-90](1){0}
  \node(2)(08,0){2}
  \nodelabel[ExtNL=y,NLdist=1,NLangle=-90](2){3}
  \node[Nframe=y,Nw=3.5,Nh=3.5](3)(16,0){3}
  \nodelabel[ExtNL=y,NLdist=1,NLangle=-90](3){1}
  \node(4)(24,0){4}
  \nodelabel[ExtNL=y,NLdist=1,NLangle=-90](4){2}
  \node(7)(48,0){6}
  \nodelabel[ExtNL=y,NLdist=1,NLangle=-90](7){3}
%
%
  \drawedge[linegray=1,ELside=r,ELdist=3](1,2){$\scriptstyle\acc$}
  \drawedge[linegray=1,ELside=r,ELdist=3](2,3){$\scriptstyle\acc$}
  \drawedge[linegray=1,ELside=r,ELdist=3](3,4){$\scriptstyle\acc$}
  \drawedge[linegray=1,ELside=r,ELdist=3](4,7){$\scriptstyle\neg\acc$}
\end{gpicture}

\section{Tree automata for Validity}
\label{sec:AKMV}

In this section, we give one of the most challenging constructions (Theorem
\ref{thm:AKMV}) of the paper, namely, the tree automaton that accepts all valid
and realizable \kTTs which are ``good''.  Good \kTTs are defined below.
In this section, we restrict ourselves to 
closed intervals; that is, those of the form 
$[a,b]$ and $[a, \infty)$, where $a, b \in \mathbb{N}$. 
Fix $K\geq2$.
Not all graphs defined by \kTTs are realizable \TCWs.  Indeed, if $\tau$
is such a \TT, the edge relation $\procrel$ may have cycles or may be
branching, which is not possible in a \TCW. Also, the timing constraints given
by $\matchrel^I$ need not comply with the $\procrel$ relation: for instance, we
may have a timing constraint $e\matchrel^I f$ with $f\procrel^+ e$ 
($\procrel^+$ is the transitive closure of $\procrel$, i.e., $e$ can be reached 
from $f$ after taking $\geq 1$ successor edges $\procrel$).  
Moreover, some terms may define graphs denoting \TCWs which are not realizable.
So we use \AKMV to check for validity.
Since we have only closed intervals in timing
constraints, integer timestamps  suffice for realizability, as 
can be seen from the following lemma 
(Appendix \ref{app:lem-integer}).
\begin{lemma}\label{lem:integer}
  Let $\tcw=(V,\procrel,\lambda,(\matchrel^I)_{I\in\mathcal{I}(M)})$ be a \TCW
  using only closed intervals in its timing constraints.  Then, $\tcw$
  is realizable iff there exists an integer valued timestamp map satisfying all
  timing constraints.
\end{lemma}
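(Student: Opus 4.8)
The backward direction is immediate, since an integer-valued timestamp map is in particular a real-valued one. So the plan is to treat the forward direction by recasting realizability as the feasibility of a system of \emph{difference constraints} with integer right-hand sides, and then invoking the classical integrality of such systems.

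First I would collect the constraints imposed on a timestamp map $\ts\colon V\to\mathbb{R}_+$. For each edge $i\matchrel^{[a,b]}j$ we get $\ts(j)-\ts(i)\le b$ and $\ts(i)-\ts(j)\le -a$; for each $i\matchrel^{[a,\infty)}j$ we get only $\ts(i)-\ts(j)\le -a$; and the non-decreasingness requirement $\ts(i)\le\ts(j)$ for all $i\le j$ need only be imposed on successor pairs as $\ts(u)-\ts(v)\le 0$ whenever $u\procrel v$, the general case following by transitivity of ${\le}={\procrel}^*$. The key point, and the only place closedness of intervals is used, is that since every bound $a,b$ lies in $\N$, each inequality has an \emph{integer} right-hand side and is non-strict; a half-open interval would produce a strict inequality and destroy integer feasibility.

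Next I would form the weighted constraint digraph $H$ on vertex set $V$, placing an arc $x\xrightarrow{c}y$ for each collected inequality $\ts(y)-\ts(x)\le c$. By construction, a real- (resp.\ integer-) valued timestamp map satisfying all constraints is exactly a real- (resp.\ integer-) valued feasible potential of $H$, i.e.\ a map $d$ with $d(y)-d(x)\le c$ on every arc. Since $\tcw$ is realizable, such a real potential exists, and summing the defining inequalities along any cycle of $H$ yields $0\le$ (total arc weight); hence $H$ has no negative cycle. I would then adjoin a super-source $s$ with a zero-weight arc to every vertex and set $d(v)$ to be the shortest-path distance from $s$ to $v$. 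These distances are finite (no negative cycle) and, because all arc weights are integers, integer-valued, and by the triangle inequality they satisfy $d(y)-d(x)\le c$ on every arc of $H$, so $d$ is an integer feasible potential and thus an integer solution of the constraint system. Finally, shifting by $-\min_{v}d(v)\in\Z$ preserves all differences while making every value non-negative, producing an integer timestamp map $V\to\N\subseteq\mathbb{R}_+$ that satisfies all timing and ordering constraints, i.e.\ an integer realization.

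The mathematical content is entirely the classical integrality of difference-constraint systems (feasibility is governed by the absence of negative cycles, and integer weights yield integer shortest-path potentials). I therefore expect no genuine obstacle, only routine bookkeeping: verifying that every constraint has an integer bound, justifying the reduction of non-decreasingness to successor pairs, and checking that the final non-negativity shift is harmless because all constraints are on differences.
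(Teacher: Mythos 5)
Your proof is correct, but it takes a genuinely different route from the paper's. The paper's argument is a two-line rounding observation: given a real realization $\ts$, simply take $\lfloor\ts\rfloor$. Writing $i=\lfloor a\rfloor$, $j=\lfloor b\rfloor$, one has $j-i-1<b-a<j-i+1$, so for a closed interval $I$ with integer bounds $b-a\in I$ implies $j-i\in I$; since the floor also preserves the ordering and non-negativity, $\lfloor\ts\rfloor$ is an integer realization. Your argument instead recasts realizability as feasibility of a difference-constraint system with integer right-hand sides, observes that real feasibility rules out negative cycles, and extracts an integer solution as shortest-path potentials from a super-source, followed by a shift into $\N$. Both are sound, and both correctly isolate closedness of the intervals as the crucial hypothesis (open bounds would give strict inequalities, breaking either argument). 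What the paper's approach buys is brevity and an explicit, pointwise construction of the integer realization from the given real one. What yours buys is generality and robustness: it is the standard machinery for difference constraints, it makes the polynomial-time decidability of realizability (via Bellman--Ford) visible, and it does not depend on guessing the right rounding. For this particular lemma the floor argument is the lighter tool, but your reduction is the one that would survive if the constraint format were enriched.
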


Consider a set of colors $P\subseteq\{1,\ldots,K\}$.
For each $i\in P$ we let $i^+=\min\{j\in P\cup\{\infty\}\mid i<j\}$
and $i^-=\max\{j\in P\cup\{0\}\mid j<i\}$. 
If $P$ is not clear from the context, then we write $\nxt_P(i)$ and $\prv_P(i)$.
Given a \kTT $\tau$ with semantics $\sem{\tau}=(G,\chi)$, we denote by
$\Act=\dom(\chi)$ the set of active colors in $\tau$, we let 
$\Right=\max(\Act)$ and $\Left=\min\{i\in\Act\mid\chi(i)\procrel^*\chi(\Right)\}$.
If $\tau$ is not clear from the context, then we write $\Act_\tau$, 
$\Left_\tau$ and $\Right_\tau$.
A \kTT $\tau$ is \emph{good} if 
\begin{itemize}
\item $\tau ::= (a,i)\procrel(b,j) \mid (a,i)\matchrel^{I}(b,j) 
\mid \forget{i}\tau \mid \rename{i}{j}\tau \mid \tau\oplus\tau$,  
  \item for every subterm of the form
$(a,i)\procrel(b,j)$ or $(a,i)\matchrel^{I}(b,j)$ we have $i<j$,
  \item $\rename{i}{j}\tau$ is possible only if  $i^{-}<j<i^{+}$,
  \item $\tau_1\oplus\tau_2$ is allowed if $\Right_1=\Left_2$ and
  $\{i\in\Act_2\mid\Left_1\leq i\leq \Right_1\}\subseteq\Act_1$.
\end{itemize}

\begin{table}[t]
  \noindent\hspace{-3mm}
  \begin{tabular}{|c|c|c|}
    \hline
    $\tau_1$ & $\tau_2$ & $\tau_3$
    \\ \hline
    \Tree[.$\add{1}{5}{\matchrel[3,\infty]}$ [.$\add{3}{5}{\matchrel[1,3]}$ [.$\oplus$ 
      [.$\add{1}{4}{\matchrel[2,\infty]}$ [.$3\procrel 4$ ] ] [.$4\procrel 5$ ] ] ] ]
    \rule{3mm}{0pt}
    &
    \Tree[.$\add{2}{6}{\matchrel[3,\infty]}$ [.$\add{4}{6}{\matchrel[1,3]}$ [.$\oplus$ 
      [.$\add{3}{5}{\matchrel[0,2]}$ [.$4\procrel 5$ ] ] [.$5\procrel 6$ ] ] ] ]
    \rule{5mm}{0pt}
    &
    \Tree[.$\forget{5}$ [.$\oplus$
      [.$\tau_1$ ]
      [.$\rename{3}{4}$ [.$\rename{4}{5}$ [.$\forget{5}$ [.$\tau_2$ ] ] ] ]
    ] ]
    \rule{5mm}{0pt}
    \\ \hline
    ~\gusepicture{tau1} 
    &
    ~\gusepicture{tau2} 
    &
    ~\gusepicture{tau3} 
    \\ \hline 
    \rule{0pt}{8mm}%
    \gusepicture{state1} 
    &
    \gusepicture{state2} 
    &
    \gusepicture{state3} 
    \\ \hline  
  \end{tabular}
  \caption{The second row gives tree representations of three good
  $6$-\TTs $\tau_1$, $\tau_2$, $\tau_3$.  In all these terms, we ignore
  vertex labels and we use $\add{i}{j}{\matchrel I}\tau$ as a macro for
  $\tau\oplus i\matchrel^{I}j$.  The third row gives their semantics
  $\sem{\tau}=(G_\tau,\chi_\tau)$ together with a realization $\ts$, the fourth
  row gives possible states $q$ of \AKMV with $M=4$ after reading the terms.  Here, $L$ is
  the circled color.  The boolean value $\acc(i)$ for each non maximal color $i$
  is written between $\tsm(i)$ and $\tsm(i^{+})$.}
  \protect\label{table:terms-semantics-states}
\end{table}
The intuition is that these good tree terms will give rise to split TC words preserving invariants (I1)-(I3)
of the previous section, following the strategy driven approach.  
 In addition, we ensure that the natural order on the colors is consistent with the linear ordering of the points of the TCW. 

Examples of good \TTs and their semantics are given in
Table~\ref{table:terms-semantics-states}.  Note that the semantics of a \kTT
$\tau$ is a colored graph $\sem{\tau}=(G_\tau,\chi_\tau)$.
Below, we provide a direct construction of a tree automaton, which gives a clear
upper bound on the size of $\AKMV$, since obtaining this bound gets very
technical if we stick to MSO.

\begin{table}[!h]
  \noindent\hspace{-6mm}
  \begin{tabular}{|c|p{120mm}|}
    \hline
    $(a,i)\procrel(b,j)$ 
    &
    $\bot \xrightarrow{(a,i)\procrel(b,j)}q=(P,L,\tsm,\acc)$ 
    is a transition if $i<j$ and $P=\{i,j\}$, $L=i$ and $\acc(j)=\false$.  
    The values for $\tsm(i)$, $\tsm(j)$ and $\acc(i)$ are guessed.
    \\ \hline
    $(a,i)\matchrel^{I}(b,j)$ 
    &
    $ \bot \xrightarrow{(a,i)\matchrel^{I}(b,j)}q=(P,L,\tsm,\acc)$ 
    is a transition if $i<j$ and $P=\{i,j\}$, $L=j$ and $\acc(j)=\false$.  Here, $i$ and $j$ are 
    trivial blocks. 
    The values for $\tsm(i)$, $\tsm(j)$ and $\acc(i)$ are guessed such that
    ($\acc(i)=\true$ and $d(i,j)\in I$) or ($\acc(i)=\false$ and $I.up=\infty$).
    \\ \hline
    $\rename{i}{j}$
    &
    $q=(P,L,\tsm,\acc)\xrightarrow{\rename{i}{j}}q'=(P',L',\tsm',\acc')$ 
    is a transition if $i\in P$ and $i^-<j<i^+$.  
    Then, $q'$ is obtained from $q$ by replacing $i$ by $j$.  
    \\ \hline
    $\forget{i}$
    &
    $q=(P,L,\tsm,\acc)\xrightarrow{\forget{i}}q'=(P',L',\tsm',\acc')$ 
    is a transition if $L<i<\max(P)$ (endpoints should stay colored).  Then,
    state $q'$ is deterministically given by $P'=P\setminus\{i\}$, $L'=L$,
    $\tsm'=\tsm_{\mid P'}$ and $\acc'(i^-)=\ACC(i^-,i^+)\wedge(D(i^-,i^+)<M)$,
    the other values of $\acc'$ are inherited from $\acc$.
    \\ \hline
    $\oplus$
    &
    $q_1,q_2\xrightarrow{\oplus}q$
    where $q_1=(P_1,L_1,\tsm_1,\acc_1)$,
    $q_2=(P_2,L_2,\tsm_2,\acc_2)$ and $q=(P,L,\tsm,\acc)$
    is a transition if
    the following hold 
    \begin{itemize}
      \item $R_1=\max(P_1)=L_2$ and $\{i\in P_2\mid L_1\leq i\leq R_1\}\subseteq
      P_1$ (we cannot insert a new point from the second argument in the
      non-trivial block of the first argument).
      \item $P=P_1\cup P_2$, $L=L_1$, and $\tsm_{\mid P_1}=\tsm_1$ and
      $\tsm_{\mid P_2}=\tsm_2$: these updates are deterministic. In particular, 
      this implies that $\tsm_1$ and $\tsm_2$ coincide on $P_1\cap P_2$.
      \item Finally, $\acc$ satisfies $\acc(\max(P))=\false$ and
      \newline
      $\forall i\in P_1\setminus\{\max(P_1)\}\qquad \acc_1(i)\Longleftrightarrow
      \ACC_q(i,\nxt_{P_1}(i)) \wedge D_q(i,\nxt_{P_1}(i)) < M$
      \newline
      $\forall i\in P_2\setminus\{\max(P_2)\}\qquad \acc_2(i)\Longleftrightarrow 
      \ACC_q(i,\nxt_{P_2}(i)) \wedge D_q(i,\nxt_{P_2}(i)) < M$.
      \newline
      Notice that these conditions imply 
      \newline
      For all $i\in P_1$, if $\nxt_{P}(i)=\nxt_{P_1}(i)$ (e.g., if $L_1\leq
      i<R_1$) then $\acc(i)=\acc_1(i)$.
      \newline
      For all $i\in P_2$, if $\nxt_{P}(i)=\nxt_{P_2}(i)$ (e.g., if $L_2\leq i$)
      then $\acc(i)=\acc_2(i)$. 
       \end{itemize}
    \\ \hline
  \end{tabular}
  
  \caption{Transitions of \AKMV. See Table~\ref{table:terms-semantics-states} 
  and Figure~\ref{fig:small-big} for some intuitions. $I.up$ in row 2 represents upper bound of interval $I$.
  }
  \protect\label{tab:AKMV}
\end{table}

\begin{theorem}\label{thm:AKMV}
  We can build a tree automaton \AKMV with  $M^{\cO(K)}$ number of states such that
  $\Lang(\AKMV)$ is the set of \emph{good} \kTTs $\tau$ such that
  $\sem{\tau}$ is a realizable \TCW and the endpoints of $\sem{\tau}$ are the
  only colored points.
\end{theorem}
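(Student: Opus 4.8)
The goal is to build a tree automaton $\AKMV$ recognizing exactly the "good" $K$-tree terms $\tau$ whose semantics $\sem{\tau}$ is a realizable TCW with only the endpoints colored. The automaton has $M^{\cO(K)}$ states. The key insight from Lemma~\ref{lem:integer} is that with closed intervals, realizability reduces to existence of an *integer* timestamp map.

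**State design.** A tree automaton processes a term bottom-up. At each subterm $\tau$, the automaton must remember enough to (a) check that $\tau$ is good, (b) check the timing constraints can be satisfied, and (c) ensure the final term has only endpoints colored. From the tables, a state is a tuple $q=(P,L,\tsm,\acc)$ where $P$ is the set of active colors, $L$ is the left endpoint, and $\tsm\colon P\to\{0,\dots,M-1\}$ records timestamps *modulo* the maximal constant $M$ (or more precisely a bounded representation). The critical trick for keeping the state count at $M^{\cO(K)}$ is that we never store an unbounded timestamp: we store each $\tsm(i)$ as a value in $\{0,\dots,M-1\}$ together with boolean flags $\acc(i)$ recording whether the true integer distance between consecutive active colors is $<M$ or $\geq M$ (the flag $\acc(i)$ says the "accumulated" distance between color $i$ and its successor $i^+$ stays below $M$). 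This lets us reconstruct, for any pair of colors, whether their true timestamp difference lies in a given closed interval $I$ (via the derived quantities $\ACC$ and $D$ used in Table~\ref{tab:AKMV}), which is all that is needed to verify constraints $i\matchrel^I j$.

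**The construction, step by step.** The plan is to define $\AKMV$ by giving transitions for each term constructor, exactly as in Table~\ref{tab:AKMV}, and then to prove correctness by structural induction on $\tau$. First I would state the invariant maintained at every subterm: a state $q=(P,L,\tsm,\acc)$ is reachable on reading $\tau$ iff $\tau$ is good, $P=\Act_\tau$, $L=\Left_\tau$, and there is an integer realization $\ts$ of $\sem{\tau}$ consistent with the bounded data $\tsm$ and $\acc$ (i.e., $\tsm(i)\equiv\ts(\chi(i))$ appropriately and the flags correctly record whether consecutive gaps are saturated at $M$). Then I would verify the invariant is preserved by each transition: the two atomic rules seed the induction (an edge $(a,i)\procrel(b,j)$ or $(a,i)\matchrel^I(b,j)$ with the guessed timestamps satisfying the interval constraint); $\rename{i}{j}$ and $\forget{i}$ manipulate $P$ while the goodness side-conditions ($i^-<j<i^+$ for rename, $L<i<\max(P)$ for forget) guarantee we preserve the order-consistency and never uncolor an endpoint; and the $\oplus$ rule fuses states at the shared color $R_1=L_2$, checking the non-interleaving condition $\{i\in P_2\mid L_1\le i\le R_1\}\subseteq P_1$ and merging the timestamp/flag data consistently. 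The acceptance condition requires the final state to have $P=\{L,\max(P)\}$ (only endpoints colored), capturing the last clause of the theorem.

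**Soundness, completeness, and the main obstacle.** Soundness (every accepted $\tau$ yields a realizable TCW) follows by assembling the guessed integer timestamps along an accepting run into a global timestamp map and checking, via the $\acc$-flag bookkeeping, that every $\matchrel^I$ constraint holds; here Lemma~\ref{lem:integer} guarantees integer timestamps suffice. Completeness (every good realizable $\tau$ is accepted) runs the induction in reverse: given an integer realization, I read off the correct $\tsm$ and $\acc$ values at each subterm and show the matching transition fires. The state-count bound $M^{\cO(K)}$ is immediate from the shape of $q$: there are at most $2^K$ choices for $P$, at most $K$ for $L$, at most $M^{|P|}\le M^K$ for $\tsm$, and $2^K$ for $\acc$. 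I expect the main obstacle to be the $\oplus$ transition: because we only keep timestamps modulo $M$ and boolean saturation flags rather than exact values, proving that the *merged* flag data correctly reflects true integer distances across the fused point—and that no timing constraint spanning both factors is mis-evaluated—requires a careful argument that the local quantities $\ACC_q$ and $D_q$ computed from the two children's data agree with the global realization. Showing this bounded abstraction is both sound and lossless (no spurious acceptances, no missed realizations) is the technical heart of the proof.
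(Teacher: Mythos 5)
Your proposal matches the paper's own construction essentially verbatim: the same state tuple $(P,L,\tsm,\acc)$ with timestamps modulo $M$ and accuracy flags, the same invariant (a ``realizable abstraction'' of the subterm, conditions (\ref{prop:A1})--(\ref{prop:A4})) maintained by structural induction over the transitions of Table~\ref{tab:AKMV}, the same two-directional correctness argument via Lemma~\ref{lem:integer}, and the same identification of the $\oplus$ case (handled in the paper by Claim~\ref{claim:oplus}) as the technical crux. The approach and the state-count analysis are both correct.
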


\begin{proof} 
  The tree automaton \AKMV reads the \TT bottom-up and stores in its state a
  finite abstraction of the associated graph.  The finite abstraction will keep
  only the colored points of the graph.  We will only accept \emph{good} terms
  for which the natural order on the active colors coincides with the order of
  the corresponding vertices in the final \TCW. The restriction to good terms
  ensures that the graph defined by the \TT is a split-\TCW.

Moreover, to ensure realizability of the \TCW defined by a term, we will guess
timestamps of vertices modulo $M$.  We also guess while reading a subterm
whether the time elapsed between two consecutive active colors is \emph{big}
($\geq M$) or \emph{small} ($<M$).  We see below that the time elapsed is
\emph{small} iff it can be recovered $\mathsf{accurately}$ with the modulo $M$
abstraction.  Then, the automaton has to check that all these guesses are
coherent and using these values it will check that every timing constraint is
satisfied.

Formally, \emph{states of \AKMV} are tuples of the form $q=(P,L,\tsm,\acc)$,
where $P\subseteq\{1,\ldots,K\}$, $L\in P$,
$\tsm\colon P\to[M]=\{0,\ldots,M-1\}$ and $\acc\colon P\to\Bool$.
$\acc$ is a flag which stands for ``accurate'', and is used to check 
if the time elapse between two points is accurate or not, based on the time stamps. 
 
Intuitively, when reading bottom-up a \kTT $\tau$ with
$\sem{\tau}=(V,\procrel,\lambda,(\matchrel^I)_{I\in\mathcal{I}},\chi)$, the automaton \AKMV
will reach a state $q=(P,L,\tsm,\acc)$ such that 
\begin{enumerate}[label=($\mathsf{A}_{\arabic*}$),ref=$\mathsf{A}_{\arabic*}$]
  \item\label{prop:A1} $P=\Act$ is the set of \emph{active} colors in $\tau$,
  $L=\Left$ and $\max(P)=\Right$. 
  
  \item\label{prop:A2} For all $i\in P$, if $L\leq i<\max(P)$ then
  $\chi(i)\procrel^+ \chi(i^+)$ in $\sem{\tau}$.
  
  \item\label{prop:A3} 
  Let ${\hole}=\{(\chi(i),\chi(i^+))\mid i\in P \wedge i<L\}$. 
  This extra relation serves at ordering the blocks of a split-\TCW.
  Then, $(\sem{\tau},\hole)$ is an \emph{ordered} split-\TCW, i.e.,
  ${<}=({\procrel}\cup{\hole})^+$ is a total order on $V$, timing constraints in
  $\sem{\tau}$ are $<$-compatible ${\matchrel}^I\subseteq{<}$ for all $I$, the
  \emph{direct successor} relation of $<$ is ${\lessdot}={\procrel}\cup{\hole}$
  and ${\procrel}\cap{\hole}=\emptyset$.
  Moreover, targets of timing constraints are in the last block: for all
  $u\matchrel^{I}v$ in $(\sem{\tau},\hole)$, we have $\chi(L)\procrel^* v$.

  \item\label{prop:A4} 
  There exists a timestamp map $\ts\colon V\to\N$ such that
  \begin{itemize}
    \item all constraints are satisfied: $\ts(v)-\ts(u)\in I$ for all
    $u\matchrel^Iv$ in $\sem{\tau}$,

    \item time is non-decreasing: $\ts(u)\leq\ts(v)$ for all $u\leq v$,

    \item $(\tsm,\acc)$ is the modulo $M$ abstraction of $\ts$:  $\forall i\in P$ we
    have $\tsm(i)=\ts(\chi(i))[M]$ and $\acc(i)=\true$ iff $i^+\neq\infty$ and
    $\ts(\chi(i^+))-\ts(\chi(i))<M$. 
  \end{itemize}
\end{enumerate}
We say that the state $q$ is a \emph{realizable abstraction} of a term $\tau$ if it satisfies conditions
(\ref{prop:A1}--\ref{prop:A4}).

Indeed, the finite state automaton \AKMV cannot store the timestamp map $\ts$ 
witnessing realizability. Instead, it stores the modulo $M$ abstraction $(\tsm,\acc)$.  We will see that
\AKMV can check realizability based on the abstraction $(\tsm,\acc)$ of $\ts$ and
can maintain this abstraction while reading the term bottom-up.

We introduce some notations.
Let $q=(P,L,\tsm,\acc)$ be a state and let $i,j\in P$ with $i\leq j$.
We define $d(i,j)=(\tsm(j)-\tsm(i))[M]$ and 
$D(i,j)=\sum_{k\in P\mid i\leq k<j}d(k,k^+)$. 
We also define $\ACC(i,j)=\bigwedge_{k\in P\mid i\leq k<j}\acc(k)$.
If the state is not clear from the context, then we write $d_q(i,j)$,
$D_q(i,j)$, $\ACC_q(i,j)$.
For instance, with the state $q_3$ corresponding to the term $\tau_3$ of
Table~\ref{table:terms-semantics-states}, we have $\ACC(1,4)=\true$, $d(1,4)=2$
and $D(1,4)=6=\ts(4)-\ts(1)$ is the $\mathsf{accurate}$ value of the time elapsed.
Whereas, $\ACC(3,6)=\false$ and $d(3,6)=2=D(3,6)$ are both \emph{strict
modulo-$M$ under-approximations} of the time elapsed $\ts(6)-\ts(3)=6$.
The \emph{transitions of \AKMV} are defined in Table~\ref{tab:AKMV}. 

\mypara{Accepting condition} %
The accepting states of \AKMV should correspond to abstractions of \TCWs.  Hence
the accepting states are of the form 
$(\{i,j\},L,\tsm,\acc)$ with $i,j\in\{1,\ldots,K\}$, $i<j$, $L=i$ and
$\acc(j)=\false$.
The correctness of this construction is in Appendix \ref{app:akmv-main}, and 
is obtained by proving (i) the transitions of $\AKMV$ indeed preserve the conditions 
(\ref{prop:A1}--\ref{prop:A4}), (ii) (\ref{prop:A1}--\ref{prop:A4}) ensure among
other things, that the boolean values $\acc(i)$, $\ACC(i,j)$ for $i<j$
indeed defines when the elapse of time is accurately captured by the modulo $M$
abstraction: that is, $\ACC(i,j)$ is true iff the actual time elapse between $i$
and $j$ is captured using the modulo $M$ abstraction $D(i,j)$. 
\end{proof}

\section{Tree automata for timed systems}
\label{sec:TA-sys}

The goal of this section is to build a tree automaton which accepts tree terms 
denoting \TCWs accepted by a TPDA. The existence of a tree automaton can be proved by showing the MSO definability of the 
runs  of the TPDA \Sys on a \TCW. However, as seen in section \ref{sec:AKMV}, we directly construct a tree automaton 
for better complexity. Given the timed system \Sys, let $K$ be the bound on tree-width given by
Theorem~\ref{thm:sw-tpda} and let $M$ be one more than the maximal constant occurring
in the guards of \Sys.
The automaton \AKMS will accept \emph{good} \kTTs with the additional 
restriction that a timing constraint is immediately combined with an existing 
term. That is, \emph{restricted} \kTTs are \emph{good} \kTTs restricted to the 
following syntax: \\
$
\tau ::= (a,i)\procrel(b,j) \mid \stt\oplus [(a,i)\matchrel^{I}(b,j)] 
\mid \forget{i}\tau \mid \rename{i}{j}\tau \mid \tau\oplus\tau
$

\begin{restatable}{theorem}{thmAKMS}\label{thm:AKMS}
  Let \Sys be a TPDA of size $|\Sys|$ (constants encoded in
  unary)
  with set of clocks $X$ and using constants less than $M$.  Let $K$ be the 
  bound on tree-width given by Theorem~\ref{thm:sw-tpda}.
  Then, we can build a tree automaton \AKMS with $|\Sys|^{\cO(K)}\cdot
  K^{\cO(|X|)}$ states such that
  \AKMS accepts the set of \emph{restricted} \kTTs
  $\stt$ such that $\sem{\tau}\in\TCW(\Sys)$. 
  Further, $\TCW(\Sys)=\sem{\Lang(\AKMS)}=\{\sem{\tau}\mid\tau\in\Lang(\AKMS)\}$.
\end{restatable}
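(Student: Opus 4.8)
The plan is to build \AKMS as a bottom-up tree automaton that, while reading a restricted \kTT $\tau$, stores in its state a finite abstraction of an accepting abstract run of \Sys over the split-\TCW $\sem\tau$, keeping data only for the active (colored) vertices. As for \AKMV, the skeleton of a state records $P=\Act_\tau$ and $L=\Left_\tau$, so that the ordering properties (\ref{prop:A1})--(\ref{prop:A3}) are available and the graph is guaranteed to be an ordered split-\TCW. On top of this skeleton I would attach: for every $i\in P$, the transition $t_i\in\Delta$ guessed at $\chi(i)$ (its source and target states are the control states of \Sys just before and after $\chi(i)$) together with a flag telling whether the push/pop obligation of its stack operation is still unmatched; for the current right-end $\max(P)$ only -- since every matching edge targets it -- a record of which conjuncts of its guard are still unmatched; and, for every clock $x\in X$, the color $\mathrm{lr}(x)\in P\cup\{\bot\}$ of the last reset of $x$ visible on the relevant block. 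The intended invariant is (\ref{prop:A1})--(\ref{prop:A3}) together with a run-existence condition replacing the timestamp condition (\ref{prop:A4}): there is a partial run of \Sys on $\sem\tau$ realising the stored transitions at the active colors, with a valid hidden sub-run between consecutive colors of a block, such that the guessed typing of the edges $(\matchrel^I)_I$ makes them coincide exactly with the matching relation induced by that run as in Definition~\ref{def:semantics}. Note that \AKMS certifies only $\sem\tau\in\TCW(\Sys)$ and never realizability, so it stores no timestamp; realizability stays with \AKMV and the two are intersected later for emptiness.

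The transitions follow the shape of Table~\ref{tab:AKMV}, one family per operator of the restricted grammar. On an atomic $(a,i)\procrel(b,j)$ the automaton guesses $t_i,t_j\in\Delta$ with actions $a,b$ whose states chain (post-state of $t_i$ equals pre-state of $t_j$) and initialises the obligation data and $\mathrm{lr}$ from the resets of $t_i,t_j$; the position-$0$ vertex is the special case $a=\varepsilon$, pre-state $s_0$, with all clocks reset. The characteristic operation $\tau\oplus[(a,i)\matchrel^{I}(b,j)]$ fuses the target $j$ with the right-end $\max(P)$ and either fuses $i$ with an existing color or inserts it as a fresh smaller color to the left (for a clock edge precisely a trivial reset block of invariant (I1)); the automaton then nondeterministically declares the edge a stack edge or a clock edge for some $x\in X$ and checks this against the stored transitions: a stack edge demands ${\op_i}={\push}_c$, ${\op_j}={\pop}^{I}_c$ for a common $c\in\stackalphabet$ and a compatible pending-push record, while a clock edge demands $x\in R_i$, the conjunct $x\in I$ to be a still-pending guard of $t_j$, and uses $\mathrm{lr}$ to certify that $i$ is the last reset of $x$ before $j$. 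The operation $\forget{i}$ is enabled only when $L<i<\max(P)$, every obligation at $i$ is discharged and $i\notin\mathrm{lr}(X)$, so that the two already-verified sub-runs abutting $\chi(i)$ concatenate; $\rename{i}{j}$ merely relabels, as in \AKMV; and $\tau_1\oplus\tau_2$ fuses all shared colors, checks that the stored transitions agree there and that the obligation and push records are compatible, and recomputes $\mathrm{lr}$ by letting the resets of the right argument supersede those of the left. Accepting states represent a single block whose left endpoint is the position-$0$ vertex and whose right endpoint carries a state in $F$, with all obligations discharged.

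Correctness splits in the usual two directions. For soundness ($\sem{\Lang(\AKMS)}\subseteq\TCW(\Sys)$) I read off, from an accepting computation, the guessed transitions and edge typing, and use the invariant to assemble a genuine accepting abstract run of \Sys whose induced matching is exactly that of $\sem\tau$, so $\sem\tau\in\TCW(\Sys)$. For the reverse inclusion I take any $\tcw\in\TCW(\Sys)$: by Theorem~\ref{thm:sw-tpda}\eqref{item:TPDA-sw} it has tree-width at most $K$, and the right-to-left strategy of Lemma~\ref{TPDA-bound} decomposes it using exactly the operators of the restricted grammar, keeping colored precisely the block endpoints and last-reset points (invariants (I1)--(I3)); this yields a good restricted \kTT $\tau$ with $\sem\tau=\tcw$, and threading the given run of \Sys through this decomposition produces an accepting computation of \AKMS. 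Together these give $\Lang(\AKMS)=\{\tau\text{ restricted good}\mid\sem\tau\in\TCW(\Sys)\}$ and the equality $\TCW(\Sys)=\sem{\Lang(\AKMS)}$.

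The step I expect to be the main obstacle is making the bounded, color-indexed state certify \emph{exactly} the run-induced matching. For clock edges this is the global last-reset side condition ($x\notin R_k$ for $i<k<j$), which must be discharged purely through $\mathrm{lr}$; this is sound only because invariants (I1)--(I3) keep every relevant last reset colored and because the $\oplus$ and $\forget$ updates maintain $\mathrm{lr}$ faithfully. For stack edges it is well-nestedness (\ref{item:T1}) together with correct symbol and interval, which must be enforced through a pending-push record of size $\cO(K)$; here one must argue that the bounded split-width of $\sem\tau$ prevents any crossing or escaping stack edge, so that a color-local check suffices. Proving that these local checks are equivalent to the conditions of Definition~\ref{def:semantics}, and that the strategy decomposition is faithfully expressible as a restricted term so that the equality (not merely an inclusion) holds, is the technical core. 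The state count is then routine: the skeleton contributes $2^{K}K$, the per-color transitions and flags $|\Sys|^{\cO(K)}$, and the clock data (the map $\mathrm{lr}$ and the right-end pending-guard record) $K^{\cO(|X|)}$, for a total of $|\Sys|^{\cO(K)}\cdot K^{\cO(|X|)}$ states, as claimed.
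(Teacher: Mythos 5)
Your overall architecture coincides with the paper's: a bottom-up automaton whose states keep, for each active color, a guessed transition of \Sys, plus push/pop obligation bits, a per-clock ``guard already checked at the right end'' record and per-clock source-tracking data; soundness by reassembling an abstract run from the guesses, completeness by threading a run of \Sys through Eve's strategy of Lemma~\ref{TPDA-bound} (which produces exactly the restricted terms), and the same state count. The paper's state is $q=(P,L,\delta,\pushb,\popb,G,Z)$ and its transitions (Tables~\ref{tab:AKMS-1} and~\ref{tab:AKMS-2}) match your atomic/$\forget{i}$/$\rename{i}{j}$/$\matchrel$/$\oplus$ cases almost item for item, including restricting stack edges to span the non-trivial block so that well-nestedness becomes a local check.

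The one place where your bookkeeping, as described, would fail is the clock data at combine nodes. The paper's $Z_x$ is not a last-reset pointer: it records the color $i<L$ of a hanging point that is already the \emph{source of an added timing constraint} $i\matchrel^{I}j$ for clock $x$ whose target $j$ lies in the non-trivial block. Its sole purpose is to let the combine rule \emph{reject} (conditions $\mathsf{C}_5$, $\mathsf{C}_6$) any merge that would insert a new point resetting $x$ between $i$ and the block, since such an insertion silently invalidates the last-reset side condition of Definition~\ref{def:semantics} for an edge that has already been certified (and whose target may since have been forgotten). Your $\mathrm{lr}(x)$ is instead the last \emph{reset} of $x$, and your stated $\oplus$ update (``resets of the right argument supersede those of the left'') would update $\mathrm{lr}$ and accept exactly the combines that must be refused, breaking soundness: the automaton would accept a term whose semantics pairs $j$ with a reset point that is no longer the last one before $j$. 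To repair this you need the paper's $Z_x$ (source of a pending cross-block constraint, set when the edge is added in $\mathsf{M}_2$) together with rejection conditions at $\oplus$; you correctly flag this area as the technical core, but the concrete update rule you give points the wrong way.
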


\begin{proof}[Proof (Sketch)]
  A state of \AKMS is a tuple $q=(P,L,\delta,\pushb,\popb,G,Z)$ where,
\begin{itemize}
  \item $P$ is the set of active colors, and $L=\mathsf{Left} \in P$ is the left-most point
  that is connected to the right-end-point $R=\mathsf{Right}=\max(P)$ by successor edges on the non-trivial block.

  \item $\delta$ is a map that assigns to each color $k\in P$ the transition $\delta(k)$ guessed at the leaf corresponding to color $k$,

  \item $\pushb$ and $\popb$ are two boolean variables: $\pushb=1$ iff a push-pop edge has been added to $L$ and $\popb=1$ iff a push-pop edge has been added to $R$,
  
  \item $G=(G_x)_{x\in X}$ is a boolean vector of size $|X|$: for each clock $x\in X$, $G_x=1$ iff some constraint on $x$ has already been checked at $R$,

  \item $Z=(Z_x)_{x\in X}$ assigns to each clock $x$ either the color $i\in P$
  with $i<L$ of the unique point on the left of the non-trivial block which is
  the source of a timing constraint $i\matchrel^{I}j$ for clock $x$, or $\bot$
  if no such points exist.
\end{itemize}
For $j\in P$, let $\reset(j)$ be the set of clocks that are reset in the transition $\delta(j)$.
We describe here the most involved kind of transition $q' \oplus q''$ for states $q',q''$. 
The remaining transitions as well as the full proof can be seen in Appendix \ref{app:sys-trans}. 
Let  $q'=(P',L',\delta',\pushb',\popb',G',Z')$,
$q''=(P'',L'',\delta'',\pushb'',\popb'',G'',Z'')$
and $q=(P,L,\delta,\pushb,\popb,G,Z)$. Then 
$q',q''\xrightarrow{\oplus}q$ is a transition if
the following hold:

\centerline{\includegraphics[scale=0.3,page=27]{all-figs}}
\begin{enumerate}[nosep,label= $\mathsf{C}_{\arabic*}$:,ref=$\mathsf{C}_{\arabic*}$]
  \item $R'=\max(P')=L''$ and $\{i\in P''\mid L'\leq i\leq R'\}\subseteq
  P'$ (we cannot insert a new point from the second argument in the
  non-trivial block of the first argument).
  Note that according to $\mathsf{C_1}$, the points
  \protect\includegraphics[scale=0.25,page=28]{all-figs},
  \protect\includegraphics[scale=0.25,page=29]{all-figs} and
  \protect\includegraphics[scale=0.25,page=35]{all-figs} in $P''$ lying between
  $L', R'$ are already points in the non-trivial block connecting $L'$ to $R'$.

  \item $\forall i\in P'\cap P''$, $\delta'(i)=\delta''(i)$ (the guessed
  transitions match).
  By $\mathsf{C}_2$, the transitions $\delta', \delta''$ of
  \protect\includegraphics[scale=0.25,page=28]{all-figs},
  \protect\includegraphics[scale=0.25,page=29]{all-figs} and
  \protect\includegraphics[scale=0.25,page=35]{all-figs} must match.

  \item if there is a $\pushb$ operation in $\delta''(L'')$ then
  $\pushb''=1$ and if there is a pop operation in $\delta'(R')$ then
  $\popb'=1$ (the push-pop edges corresponding to the merging point have
  been added, if they exist). By $\mathsf{C}_3$, if $\delta(R')=\delta(L'')$ contains a pop (resp.\ push)
  operation then $R'=L''$ is the target (resp.\ source) of a push-pop edge.

      \item if some guard $x\in I$ is in $\delta(R')$, then $G'_x=1$ 
      (before we merge, we ensure that the clock guard for $x$ in the transition guessed at $R'$, if any, has been checked). After the merge, $R'=L''$ becomes an internal point; hence by $\mathsf{C}_4$,
any guard $x \in I$ in $\delta'(R')$ must be checked already, i.e., $G'_x=1$.
After the merge, it is no more possible to add an edge $\curvearrowright^I$
leading into $R'$.

      \item if $Z'_x\neq\bot$, then $\forall j\in P''$, $Z'_x<j< L'$ implies
      $x\not\in Reset''(j)$ (If a matching edge starting at $Z'_x<L'$ had been
      seen earlier in run leading to $q'$, then $x$ should not have been reset
      in $q''$ between $Z'_x$ and $L'$, else it would violate the consistency of clocks).  By
      $\mathsf{C}_5$, if $Z'_x$ is
\protect\includegraphics[scale=0.25,page=30]{all-figs} (resp.\
\protect\includegraphics[scale=0.25,page=31]{all-figs}), i.e.,
\protect\includegraphics[scale=0.25,page=30]{all-figs} (resp.\
\protect\includegraphics[scale=0.25,page=31]{all-figs}) is the source of a
timing constraint $\curvearrowright^I$ for clock $x$ whose target is in the
$L'$--$R'$ block, 
then clock $x$ cannot be reset at
\protect\includegraphics[scale=0.25,page=32]{all-figs} and 
\protect\includegraphics[scale=0.25,page=33]{all-figs}
(resp.\ \protect\includegraphics[scale=0.25,page=33]{all-figs}).

      \item if $Z''_x\neq\bot$, then $\forall j\in P'$, $Z''_x<j<L''$ implies
      $x\not\in Reset'(j)$ (If a matching edge starting at $Z''_x<L''$ had been
      seen earlier in run leading to $q''$, then $x$ should not have been reset
      in $q'$ between $Z''_x$ and $L''$).  By $\mathsf{C}_6$, if $Z''_x$ is
\protect\includegraphics[scale=0.25,page=32]{all-figs}, then $x$ cannot be reset
at \protect\includegraphics[scale=0.25,page=31]{all-figs}, 
\protect\includegraphics[scale=0.25,page=36]{all-figs},
\protect\includegraphics[scale=0.25,page=28]{all-figs}, or
\protect\includegraphics[scale=0.25,page=29]{all-figs}.
Likewise, if $Z''_x$ was \protect\includegraphics[scale=0.25,page=33]{all-figs},
then clock $x$ cannot be reset at
\protect\includegraphics[scale=0.25,page=36]{all-figs},
\protect\includegraphics[scale=0.25,page=28]{all-figs}, or
\protect\includegraphics[scale=0.25,page=29]{all-figs}.

      \item $P=P'\cup P''$, $L=L'$, $\delta=\delta'\cup \delta''$,
      $\pushb=\pushb'$, $\popb=\popb''$, $G=G''$ and \newline
      for all $x\in X$ we have $Z_x=Z''_x$ if $Z''_x<L'$, else $Z_x=Z'_x$. 
      $\mathsf{C}_7$ says that on merging, we obtain
the third split-\TCW.
After the merge, if $Z_x$ is defined, it must be on the left of $L'$, i.e., one of
\protect\includegraphics[scale=0.25,page=30]{all-figs},
\protect\includegraphics[scale=0.25,page=32]{all-figs},
\protect\includegraphics[scale=0.25,page=31]{all-figs},
\protect\includegraphics[scale=0.25,page=33]{all-figs}.

      Notice that the above three conditions ensure the well-nestedness of clocks. 
  By $\mathsf{C}_5$ and $\mathsf{C}_6$ we cannot have both
$Z'_x\in\{\protect\includegraphics[scale=0.25,page=30]{all-figs},
\protect\includegraphics[scale=0.25,page=31]{all-figs}\}$ and
$Z''_x\in\{\protect\includegraphics[scale=0.25,page=32]{all-figs},
\protect\includegraphics[scale=0.25,page=33]{all-figs}\}$.  So if
$Z''_x\in\{\protect\includegraphics[scale=0.25,page=32]{all-figs},
\protect\includegraphics[scale=0.25,page=33]{all-figs}\}$ then $Z_x=Z''_x$ and
otherwise $Z_x=Z'_x$ (including when
$Z''_x\in\{\protect\includegraphics[scale=0.25,page=28]{all-figs}, 
\protect\includegraphics[scale=0.25,page=29]{all-figs}\}$ and $Z'_x=\bot$).
\end{enumerate}

\mypara{Accepting Condition} A state $q=(P,L,\delta,\pushb,\popb,G,Z)$ is
accepting if $L=\min(P)$, $\delta(L)$ is some dummy $\varepsilon$-transition 
resetting all clocks and leading to the initial state,
$\mathsf{target}(\delta(R))$ is a final state
and if $\delta(R)$ has a pop operation then $\popb=1$, if it has a
constraint/guard for clock $x$, then $G_x=1$.  Note that the above automaton
only accepts restricted \kTTs; this is sufficient for emptiness checking since
Eve's winning strategy in Section~\ref{sec:stw-tpda} captures all behaviours of
the $\TCW(\Sys)$ while generating only restricted \kTTs.  As a corollary we
obtain (see Appendix~\ref{app:sys-corr}),

\begin{theorem}\label{thm:sys-cor}
  Let \Sys be a TPDA. We have $L(\Sys)\neq\emptyset$ iff
  $L(\AKMV\cap\AKMS)\neq\emptyset$.
\end{theorem}
If the underlying system is a timed automaton, we can restrict the state space
to storing just the tuple $(P,\delta,G)$ as the other components are not
required and $L$ is always $\min(P)$.

\mypara{Possible Extensions} We now briefly explain how to extend our technique
in the presence of diagonal guards: these are guards of the form $x-y \in I$ or
$x-\mathsf{pop}\in I$ or $\mathsf{pop}-x\in I$ where $x, y$ are clocks, and $I$
is a time interval.  The first is a guard that checks the difference between
two clock values, while the other two check the difference between the value of a
clock and the age of the topmost stack symbol at the time of the pop.  To handle
a constraint of the form $x-y \in I$, it is enough to check the difference
between the guessed time stamps at the last reset points of clocks $y, x$ to be
in $I$.  Likewise, to check $x-\mathsf{pop}\in I$ or $\mathsf{pop}-x\in I$, we
check the difference between the guessed time stamps at the points where the top
symbol was pushed on the stack and the last reset of clock $x$.  
Based on the strategy-guided approach for building the tree automaton, 
note that the
last reset points of $x, y$ will not be forgotten until the automaton decides 
to accept; likewise, the push point will not be forgotten until
the pop transition is encountered.  Given this, our construction of the tree
automaton can be extended with the above checks to handle diagonal guards as
well.

\section{Implementation and a case-study}
\label{sec:expt}
We have implemented the emptiness checking procedure for TPDA using our tree-automata based approach,  and describe some results here.  As discussed earlier, the $\mathsf{EXPTIME}$-completeness of this problem for TPDA in general \footnote{note that the $\mathsf{EXPTIME}$ hardness is via poly-time reductions and hence we can $\mathsf{EXPTIME}$ hard and still in $\mathsf{ETIME}$} suggests that in the worst-case, we cannot really hope to do well. However, for certain interesting subclasses of TPDA, we obtain good performance results.

As a concrete subclass, the complexity significantly improves when there is no
extra clock other than the timing constraints associated with the stack; while
popping a symbol, we simply check the time elapsed since the push.  Note that
this can be used to model systems where timing constraints are well-nested: clock resets correspond to push and checking guards corresponds to checking the
age of the topmost stack symbol.  Thus, this gives a technique for reducing the
number of clocks for a timed system with nested timing constraints. For this subclass, the exact number of states of the tree automaton can be
improved to $ 2 \times {(M \times T)}^2 $, where $M$ is 1 plus the maximum
constant, and $T$ is the number of transitions.  This idea can be extended
further to incorporate clocks whose constraints are well-nested with respect to
the stack.  We can also handle clocks which are reset and checked in consecutive
transitions.

For the general model (one stack + any number of clocks), we can use
optimizations to reduce the number of states of the tree automaton to ${(M
\times T)}^{2|X|+2} \times 2^{2|X|+1}$, where $|X|$ is the number of clocks, $M$
is 1 plus the maximum constant and $T$ is the number of transitions.  To see
this, consider the worst case scenario, where a state of the tree automaton has
$|X|$ hanging points and $|X|$ reset points.  In total there can be $2|X|+2$
active points including the left and right end-points of the non-trivial block.
After a combine operation, we can forget a point $i$ of the new state, if it is
the case that every clock $x$ reset at the transition (guessed) at point $i$ is
also reset at some transition at a point after $i$.  Following this strategy, if
we aggressively forget as many points as we can, we will have at most $|X|$
internal (reset) active points between the left and right end-points of the
non-trivial block.  Thus, we reduce the number of active points from
$3|X|+2$ to $2|X|+2$.

As a proof of concept, we have implemented our approach with these
optimizations.  We will now describe some examples we modelled and their
experimental results.  These experiments were run on a 3.5 GHz i5 PC
with 8GB RAM, with number of cores=4.

\subsection*{A Modeling Example : Maze with Constraints}

\begin{figure}[t]
  \begin{center}
  \includegraphics[scale=0.4]{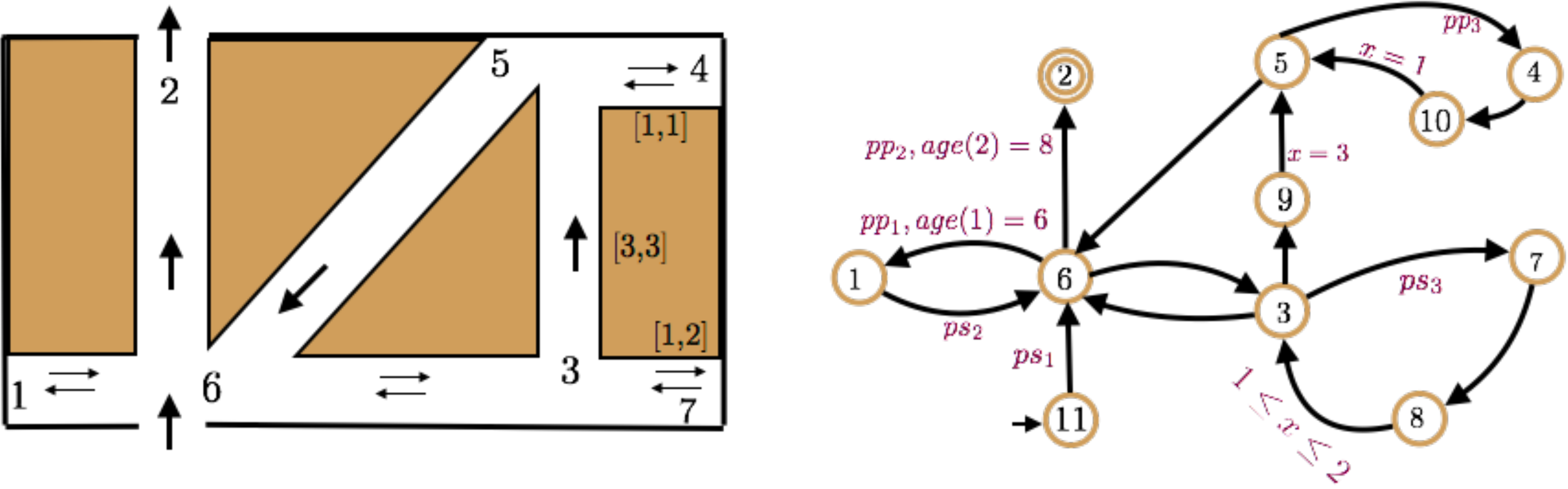}
  \end{center}
  \caption{A simple maze.  Every junction, dead end, entry point or exit point
  is called a place, numbered from 1 to 7.  6 is the entry, 2 the exit, 1, 7 and 4 are dead ends. Time intervals denote the time taken between
  adjacent places; e.g., between 1 and 2 time units must elapse between places 3 and 7. On the right, is the TPDA model of the maze.
  } 
  \protect\label{tpda-maze-tricky-egx}
\end{figure}

As a first interesting example, we model a situation of a robot successfully traversing a maze respecting multiple constraints (see  Figure~\ref{tpda-maze-tricky-egx}).
These constraints may include logical constraints: the robot must visit location
1 before exit, or the robot must load something at a certain place $i$ and
unload it at another place $j$ (so number of visits to $i$ must equal visits to
$j$).  We may also have local and global time constraints which check whether
adjacent places are visited within a time bound, or the total time taken in the
maze is within a given duration.  We show below, via an illustrative example,
that certain classes of such constraints can be converted into a 1-clock TPDA.

One can go from place $p$ to some of its adjacent place $q$ if there is an arrow
from place $p$ to place $q$.  In addition, the following types of constraints
must be respected.
\begin{enumerate}
\item \emph{Logical constraints}
specify certain order between visiting places, the number of times (upper/lower
bounds) to visit a place or places, and so on.  
The logical constraints we have in our example are (a) place 1 must be visited
exactly once, (b) from the time we enter the maze, to visiting place 1, one must
visit place 7 (load) and place 4 (unload) equal number of times, and at any
point of time, the number of visits to place 7 is not less than number of visits
to place 4.  (c) from visiting place 1 to exiting the maze, one must visit place
7 and place 4 equal number of times and, at any point during time, number of
visits to place 7 is not less than number of visits to place 4.
\item \emph{Local time constraints} specify time intervals which must
be respected while going from a place to its adjacent place.  The time taken
from some place $i$ to another adjacent place $j$ is given as a closed interval
$[a,b]$ along with the arrow.  One cannot spend any time between a
pair of adjacent places other than the ones specified in the maze. 
For example, the time bound for going from place 7 to 3 is given, while the time
taken from place 3 to place 7 and place 6 to place 1 is zero ([0,0]), since it
is not mentioned.  Further, one cannot stay in any place for non-zero duration.
  
\item \emph{Global time constraints} specify the total time that can be
elapsed between visiting any two places.  From entering into the maze to
visiting of place 1, time taken should be exactly $m$ units (a parameter).  From
visiting place 1 to exit, time bound should be exactly $n$ units (another
parameter).
\end{enumerate}

A maze respecting multiple constraints as above is converted into a 1-clock
TPDA. While the details of this conversion are given in Appendix \ref{app:exp},
the main idea is to encode local time bounds with the clock which is reset on
all transitions.  A logical constraint specifying equal number of visits to
places $p_1, p_2$ is modelled by pushing symbols while at $p_1$, and popping
them at $p_2$.  Likewise, if there is a global time constraint that requires a
time elapse in $[a,b]$ between the entry and some place $p$, then push on the
stack at entry, and check its age while at $p$.  Note that all these are
\emph{well-nested} properties.  

To check the existence of a legitimate path in the maze respecting the
constraints, our tool checks the existence of a run in the TPDA. By running our
tool on the TPDA constructed (and fixing the parameters to be $m=7$, $n=8$), we
obtain the following run: (described as a sequence of pairs the form :
State, Entry time stamp in the state) 

\noindent %
$(6, 0.0) \rightarrow (3, 0.0)
\rightarrow (7, 0.0) \rightarrow (3, 1.0) \rightarrow (7, 1.0) \rightarrow (3,
2.0) \rightarrow (5, 5.0) \rightarrow (4, 5.0) \rightarrow (5, 6.0) \rightarrow
(4, 6.0) \rightarrow (5, 7.0) \rightarrow (6, 7.0) \rightarrow (1, 7.0)
\rightarrow (6, 7.0) \rightarrow (3, 7.0) \rightarrow (7, 7.0) \rightarrow (3,
9.0) \rightarrow (7, 9.0) \rightarrow (3, 10.0) \rightarrow (5, 13.0)
\rightarrow (4, 13.0) \rightarrow (5, 14.0) \rightarrow (4, 14.0) \rightarrow
(5, 15.0) \rightarrow (6, 15.0) \rightarrow (2, 15.0)$ 

\noindent
\begin{minipage}{.42\textwidth}
  The scalability is assessed by instantiating the maze for various choices of
  maximum constants used, as well as number of transitions.  The running times
  with respect to various choices for the maximum constant are plotted on the 
  right.  More maze examples can be found in Appendix \ref{app:exp}.
\end{minipage}
\hfill
\begin{minipage}{.55\textwidth}
  \includegraphics[scale=0.5]{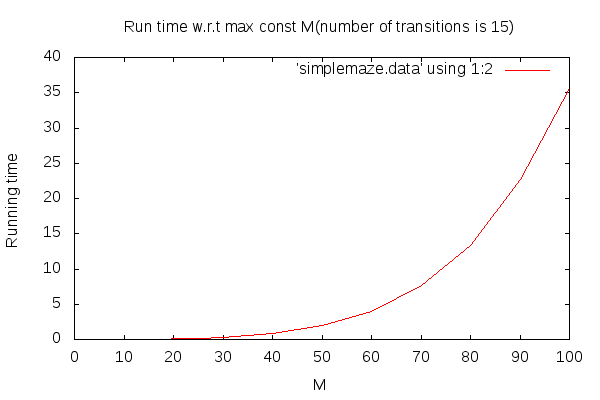} 
\end{minipage}

\section{Conclusion}
\label{sec:disc}
We have obtained a new construction for the emptiness checking of TPDA, using tree-width. 
The earlier approaches  \cite{lics12}, \cite{lata12} which handle dense time and discrete time 
push down systems respectively use an adaptation of the well-known idea of timed regions.  The technique in \cite{lata12} does not extend 
to dense time systems, and it is not clear whether the approach in \cite{lics12} will work for 
say, multi stack push down automata even with bounded scope/phase restrictions. Unlike this, our approach is uniform : all our proofs except the tree automaton for realizability already work even if we have open guards.  
Our realizability proof has to be adapted for open guards and this is work under progress;
 in this paper, we focussed on closed guards to obtain an efficient 
tool based on our theory.  Likewise, our proofs can be extended to  bounded phase/scope/rounds 
multi stack timed push down automata : we need to show a bound on the tree-width, and then adapt 
the tree automaton construction for the system automaton. The tree automaton checking realizability requires no change.   
Further, our $\mathsf{ETIME}$  complexity is the best known upper bound, as far as we know. 
  With the theoretical improvements in this paper, we implement our approach and examine its performance on real examples.  To the best of our knowledge, this is the first tool implementing timed push down systems.   
   We plan to optimize our implementation to get a more robust and scalable tool :  for instance, when the language is non-empty, a witness for non-emptiness can be produced. For the subclasses we have, it would be good to have a characterization and automatic translation (currently this is done by hand) that replaces well-nested clock constraints by stack edges, and thus leading to better implementability. We also plan to extend our implementations to give bounded under-approximations for timed automata with multiple stacks that can be used to model and analyze recursive programs with timers.\\

\noindent{Acknowledgement.} The authors thank Vincent Jug\'e for insightful discussions on the MSO definability of realizability of TC words.   

\bibliographystyle{plain}
\bibliography{papers}
\newpage
\appendix
\centerline{\bf {\Large {Appendix}}}
\section{Tree Algebra : An Example}\label{app:TT-example}
Figure \ref{tt-decomp} shows the construction of a \TCW using tree terms.  The
example uses 4 colors: think of the shapes as colors. The example starts out with 3 atomic terms, and 
builds the \TCW using the operations of combine and forget. In each step, the resultant 
\TCW obtained is drawn in the dialogue box.

 \begin{figure}[h]
\begin{center}
\includegraphics[scale=.25,page=9]{all-figs}
\end{center}
\caption{An example constructing a \TCW using atomic tree terms}
\label{tt-decomp}
\end{figure}

\section{Proof of Lemma \ref{TPDA-bound}}
\label{app:sw}

We prove this by playing the ``split game'' between \emph{Adam} and \emph{Eve}
in which \emph{Eve} has a strategy to disconnect the word without introducing
more than $3|X|+3$ colors.  \emph{Eve}'s strategy processes the word from right
to left.  Starting from any \TCW, Eve colors the end points of the \TCW, as well
as the last reset points (from the right end) corresponding to each clock.  Here
she uses at most $|X|+2$ colors.  On top of this, depending on the last point,
we have different cases.  A detailed example of the split-game is given in 
Table~\ref{tab:split-game-example}.

\begin{table}
  Consider the following \TCW with two clocks.  The green and blue edges
  represent two clocks while the red edges represent the stack.

  \centerline{
   \includegraphics[scale=.3,page=15]{all-figs}
  }

  Step 1: In the figure below, Eve adds colors to end points, and to last reset
  points.

  \centerline{
   \includegraphics[scale=.3,page=4]{all-figs}
  }

  Step 2: Eve removes the last clock edge, and adds a color to the before last
  point.

  \centerline{
   \includegraphics[scale=.3,page=3]{all-figs}
  }  

  Step 3: This enables a divide, resulting in the removal of
  \includegraphics[scale=.2,page=16]{all-figs}.  The last point of the resultant
  word will be \includegraphics[scale=.2,page=17]{all-figs}.  This point is also
  removed after removing the clock edge, and adding a color to the preceding
  point, making it \includegraphics[scale=.2,page=23]{all-figs}.  This will be
  the last point now, and is the target of a stack edge.
  
  \medskip %
  Step 4: Eve adds a color to the source of the stack edge and to
  the reset points of clock edges which cross over the stack.  This enables a
  divide, resulting in two words shown below

  \centerline{
  \includegraphics[scale=.3,page=22]{all-figs}  
  }

  Step 5: Yet another divide when the last point is the target of a stack edge,
  and the source of the stack edge is an internal point.  In this case, both
  resultant words have trivial blocks and one non-trivial block.

  \vspace{-36mm}
  \centerline{
  \includegraphics[scale=.3,page=24]{all-figs} 
  }

  \caption{Examples for the split game.}
  \protect\label{tab:split-game-example}
\end{table}

\begin{enumerate}
  \item If the last point is the target of a $\curvearrowright^{x}$ relation for
  some clock $x$, then Eve's strategy is to divide the \TCW by removing the
  clock edge (Step~2 in Table~\ref{tab:split-game-example}).  Notice that the
  source and target points of this clock edge are colored.  This results in two
  \TCWs, one of which is atomic, consisting of the matching clock edge, while in
  the other \TCW, the last point is no longer the target of a matching relation
  for $x$.  We apply Case 1 until the last point of the \TCW is no longer the
  target of a clock edge $\matchrel^{z}$.

  \item If the last point is not the target of a $\curvearrowright$ relation,
  then \emph{Eve} adds a color to the immediate previous point and divides the
  \TCW removing the atomic edge consisting of the last two points (Step~3 of
  Table~\ref{tab:split-game-example}).  Note that when this last point happens to
  be a reset point for some clock $x$, then while removing this point as
  explained above, Eve also adds a color to the new last reset point for $x$ in
  the resulting \TCW. This ensures that the last reset points for every clock
  $x$ is colored.

  \item The most interesting case is when the last point is the target of a
  $\curvearrowright^{s}$ relation of the stack.  There are two cases here.  The
  simple case is when the source and target nodes of the stack edge are the end
  points of a non-trivial block in the \TCW. In this case, Eve simply
  disconnects the stack edge.

  The harder case is when the source of the stack edge is an internal point.  In
  this case, unlike the removal of the clock edge, adding a color to the source
  of the stack edge and removing it is not a good strategy since we might have
  unboundedly many stack edges, resulting in the use of unboundedly many colors.
  To keep a bound on the number of colors needed, Eve divides the \TCW as
  follows:
  \begin{itemize}
    \item First Eve adds a color to the  source of the stack edge

    \item If there are any clock edges crossing this stack edge, Eve adds colors
    to the corresponding reset points.  Note that this results in adding atmost
    $|X|$ colors.
    
    \item Eve disconnects the \TCW into two parts, such that the right part
    $\tcw_2$ consists of one non-trivial block whose end points are the source
    and target points of the stack edge, and also contains to the left of this
    block, atmost $|X|$ trivial blocks.  Each of these trivial blocks are the
    reset points of those clock edges which cross over.  The left part $\tcw_1$
    is a \TCW consisting of all points to the left of the source of the stack
    edge, and has all remaining edges other than the clock edges which have
    crossed over.  Adam can now continue the game choosing $\tcw_1$ or $\tcw_2$.
    We illustrate this case below.  Note that in one of the words so obtained,
    the stack edge completely spans the non-trivial block, and can be easily
    removed.
  \end{itemize}
\end{enumerate}

\noindent{\emph{Invariants and bound on tree-width}}.  We now discuss some
invariants on the structure of the split-\TCWs as we play the game using the
above described strategy of Eve.  The last two split \TCWs in
Table~\ref{tab:split-game-example} are representatives of the split \TCWs that
may occur during the game after a divide operation.  These \TCWs satisfy the
following invariants:
\begin{itemize}
\item[(I1)] We have at most $|X|$ colored trivial blocks to the left of the only non-trivial block,  
\item[(I2)] The last reset node of each clock on the non-trivial block is colored
\item[(I3)]The end points of the non-trivial block are colored
\end{itemize}
To maintain the above invariants, we need $|X|+1$ extra colors than the at most
$2|X|+2$ mentioned above.  To prove the bound on the tree-width, we use the
following lemma.

\begin{lemma}
  Any split \TCW formed during the game has exactly one non-trivial block.  It
  is divided using Eve's strategy described above using at most $3|X|+3$ colors,
  and the resultant words satisfy (I1-I3).  The tree-width is hence at most
  $3|X|+2$.
\end{lemma}

\begin{proof}
  Let us start from a split \TCW satisfying (I1)-(I3).  The case when the last
  point is not the target of a clock/stack edge is easy: we simply add one color
  to the predecessor of the last point, and detach the linear edge.  If this
  point was a reset point of say clock $x$, then we need to add a color to the
  last reset point of $x$ to maintain the invariant.  Note that Eve's strategy
  does this.  If the last point is the target of a clock edge, then simply
  removing the clock edge suffices.  In both cases, the invariant is satisfied.

  Now consider the case when the last point is the target of a stack edge.  The
  simple case is when the stack spans the non-trivial block.  In this case, we
  simply remove the stack edge, preserving the invariant.  If the source of the
  stack is an internal point in the non-trivial block, then as in steps 4 and 5 
  of Table~\ref{tab:split-game-example}, the resulting split-\TCWs
  $\tcw_1$ and $\tcw_2$ will consist of a non-trivial block, and
  trivial blocks to its left, corresponding to the resets of clocks whose edges
  cross over to the block.  The number of colors in $\tcw_2$ is hence at most
  $2|X|+2$ ($|X|+2$ on the non-trivial block and $|X|$ to the left of the
  block).  Whenever the right most point is the target of a stack edge which
  does not span the non-trivial block, then Eve has to repeat case 3 above by
  (1) adding a color to the source of the stack edge, (2) color the reset points
  to the left of the source of the stack edge, of those clocks whose edges cross
  over.  Note that this needs introducing at most $|X|+1$ colors.  The split
  \TCW $\tcw$ now uses at most $3|X|+3$ colors.

  Let $S$ ($|S| \leq |X|$) be the set of trivial blocks/hanging reset points to
  the left of the block in $\tcw$.  On division of the (split)-$\TCW$ $\tcw$, we
  obtain the right part as the \TCW $\tcw_2$ containing the last stack edge of
  $\tcw$ spanning the non-trivial block; along with a set $S_2$ of hanging reset
  points to the left of the block.  Some of the points of $S_2$ could be from
  $S$ while the remaining $|S_2\setminus S|$ are the reset points of clocks
  which were on the non-trivial block of $\tcw$ to the left of the stack edge,
  and whose clock edges were crossing the stack edge.
The left part $\tcw_1$ consists of hanging reset points  $S_1$  along with a non-trivial block whose left end point 
is the same as the left end point of $\tcw$, and whose right end point 
is the source of the stack edge in $\tcw$. Note that $S_1 \subseteq S$; 
(1) If  there is a reset point in $S$ all of whose clock edges 
cross the stack edge, then this reset point will be in $S_2 \setminus S_1$  
(2) If  there is a reset point in $S$  all of whose clock edges 
are to the left of the stack edge, then this reset point will be in $S_1 \setminus S_2$, and 
(3) If  there is a reset point in $S$ such that some clock edges 
cross the stack edge, while some dont, it will be in  $S_2  \cap S_1$. 
(I1)-(I3) are satisfied for $\tcw_1, \tcw_2$, and hence the number of colors
in both is at most $|X|+|X|+2$.

In our running example, when we encounter the next stack edge, Eve needs to add
two colors, one (the triangle) for the source of the stack edge, and one (the
pentagon) for the clock which crosses the stack edge.  The last reset point of
the other clock is still the one hanging to the left of the block (colored
diamond).  On division, we obtain two split \TCW $\tcw_1, \tcw_2$ both having
the same form: a sequence of $y \leq |X|$ hanging points to the left of a block.
Each of these hanging points have a clock edge whose target lies in the
non-trivial block.  Whichever of these words is chosen by Adam, the subsequent
split-\TCWs obtained during the game will continue to have the normal form and
the number of colors needed before any divide is at most $3|X|+3$.
\end{proof}

\noindent{\emph{Remark}.} Note that if there are no stack edges in the \TCW,
then Eve's strategy is simply to keep colors on the last reset points (from the
right end) for all clocks and on the right point (no need to color the leftmost
point of the word).  It is necessary to keep colors at the last reset points of
all the clocks in order to divide the \TCW, since any of the clocks can be
checked at the last point.  This results in the use of $|X|+1$ colors.  If the
last point is the target of a matching clock edge, then division is just
removing the clock edge.  If not, then we add one color to the predecessor of
the last point, and remove the last linear edge.  This results in the use of
$|X|+2$ colors.  Note that the division of the \TCW is always by removing an
atomic term in this case, which may be a linear edge or a clock edge, as a
result of which, one of the words obtained after divsion is always atomic.  Adam
will hence always choose the left word to prolong the game.  The word chosen by
Adam is always a single block with no hanging points to the left.  This already
shows the tree-width to be at most $|X|+1$ in case of timed automata.

\section{Proofs from Section \ref{sec:AKMV}}
\label{app:AKMV}
\subsection{Proof of Lemma \ref{lem:integer}}
\label{app:lem-integer}
\begin{proof}
  Consider two non-negative real numbers $a,b\in\mathbb{R}_+$ and let
  $i=\lfloor a\rfloor$ and $j=\lfloor b\rfloor$ be their integral parts. Then, 
  $j-i-1<b-a<j-i+1$. It follows that for all closed intervals $I$ with integer 
  bounds, we have $b-a\in I$ implies $j-i\in I$.
  
  Assume there exists a non-negative real-valued timestamp map $\ts\colon
  V\to\mathbb{R}_+$ satisfying all timing constraints of $\tcw$.  Since all the time constraints are closed,  we deduce that $\lfloor\ts\rfloor\colon V\to\N$ also realizes all
  timing constraints of $\tcw$.
  The converse direction is clear.  
\end{proof}

\subsection{Proof of Theorem \ref{thm:AKMV}}
\label{app:akmv-main}
To prove Theorem \ref{thm:AKMV}, we prove the following.

\begin{claim}\label{claim:real-abstr}
  Let $q$ be a state and $\tau$ be a \TT. Then the following hold:
  \begin{enumerate}
    \item Assume (\ref{prop:A1}--\ref{prop:A3}) are satisfied.  Then, for all
    $i,j\in P$ we have $i<j$ iff $\chi(i)<\chi(j)$: the natural ordering on
    colors coincide with the ordering of colored points in the split-\TCW
    $(\sem{\tau},\hole)$.
  
    \item  Assume that $\ts$ is a timestamp map satisfying items 2 and 3 of \eqref{prop:A4}. 
    Then, for all $i,j\in P$ such that $i\leq j$, we have
    $d(i,j)=D(i,j)[M]=(\ts(\chi(j))-\ts(\chi(i)))[M]$ and
    $d(i,j)\leq D(i,j)\leq\ts(\chi(j))-\ts(\chi(i))$ 
    ($d$ and $D$ give modulo $M$ under-approximations of the actual time
    elapsed).  Moreover, $\ACC$ tells whether $D$ gives the $\mathsf{accurate}$
    elapse of time:
    \begin{align*}
      \acc(i)=\true & \Longleftrightarrow d(i,i^+)=\ts(\chi(i^+))-\ts(\chi(i)) 
      \\
      \ACC(i,j)=\true & \Longleftrightarrow D(i,j)=\ts(\chi(j))-\ts(\chi(i)) \\
      \ACC(i,j)=\false & \implies \ts(\chi(j))-\ts(\chi(i)) \geq M 
    \end{align*}
  \end{enumerate}
\end{claim}
\begin{proof}
  1.  From (\ref{prop:A2}--\ref{prop:A3}) we immediately get $\chi(i)<\chi(i^+)$
  for all $i\in P\setminus\{\max(P)\}$.  By transitivity we obtain
  $\chi(i)<\chi(j)$ for all $i,j\in P$ with $i<j$.  Since $<$ is a strict total
  order on $V$, we deduce that, if $\chi(i)<\chi(j)$ for some $i,j\in P$, then
  $j\leq i$ is not possible.
  
  \smallskip\noindent
  2. Let $i,j\in P$ with $i\leq j$. Using items 2 and 3 of \eqref{prop:A4} we get
  $$
  d(i,j) = (\tsm(j)-\tsm(i))[M] = (\ts(\chi(j))[M]-\ts(\chi(i))[M])[M] = 
  (\ts(\chi(j))-\ts(\chi(i)))[M] \,.
  $$
  Applying this equality for every pair $(k,k^+)$ such that $i\leq k<j$ we get 
  $D(i,j)[M]=(\ts(\chi(j))-\ts(\chi(i)))[M]$. Since $\ts$ is non-decreasing 
  (item 2 of \ref{prop:A4}), it follows that 
  $d(i,j)\leq D(i,j)\leq\ts(\chi(j))-\ts(\chi(i))$.
  
  Now, using again \eqref{prop:A4} we obtain $\acc(k)=\true$ iff
  $d(k,k^+)=\ts(\chi(k^+))-\ts(\chi(k))$. Applying this to all $i\leq k<j$ we 
  get $\ACC(i,j)=\true$ iff $D(i,j)=\ts(\chi(j))-\ts(\chi(i))$.
  
  Finally, $\ACC(i,j)=\false$ implies $\acc(k)=\false$ for some $i\leq k<j$.  Using
  \eqref{prop:A4} we obtain $\ts(\chi(j))-\ts(\chi(i)) \geq
  \ts(\chi(k^+))-\ts(\chi(k)) \geq M$.
\end{proof}

Next we show that the transitions of $\AKMV$ indeed preserve the conditions 
\eqref{prop:A1} to \eqref{prop:A4}. That is, any run of $\AKMV$ is such that \ref{prop:A1}--\ref{prop:A4}
hold good.

\begin{lemma}\label{lem:AKMV-tcw}
  Let $\tau$ be a \kTT and assume that $\AKMV$ has a run on $\tau$
  reaching state $q$.  Then, $\tau$ is \emph{good} and $q$ is a realizable
  abstraction of $\tau$.
\end{lemma}

\begin{proof}
  The conditions on the transitions for $\rename{i}{j}$, $\add{i}{j}{\procrel}$ 
  and $\add{i}{j}{\matchrel I}$ in Table \ref{tab:AKMV}
directly ensure that the term is good.
  We show that (\ref{prop:A1}--\ref{prop:A4}) are maintained by transitions of \AKMV.
\begin{itemize}
  \item Atomic \TTs (1): Consider a transition $\bot \xrightarrow{(a,i)\procrel(b,j)}q$ of \AKMV.
  
  It is clear that $q$ is a realizable abstraction of the term 
  $\tau=(a,i)\oplus(b,j)\oplus i\procrel j$.

  \item Atomic \TTs (2): Consider a transition $\bot \xrightarrow{(a,i)\matchrel^{I}(b,j)}q$ of \AKMV.
  
  It is clear that (\ref{prop:A1}--\ref{prop:A3}) are satisfied for state $q$
  and term $\tau=(a,i)\oplus(b,j)\oplus i\matchrel^{I}j$. 
  
  Now, we define $\ts(\chi(i))=\tsm(i)$ and $\ts(\chi(j))=
  \begin{cases}
    \tsm(i)+d(i,j) & \text{if \acc(i)=\true}  \\
    \tsm(i)+d(i,j)+M & \text{otherwise.}
  \end{cases}$
  
  Using the last condition ($\acc(i)=\true$ and $d(i,j)\in I$) or
  ($\acc(i)=\false$ and $I.up=\infty$) of the transition we can easily check
  that \eqref{prop:A4} is satisfied.
  
  \item $\rename{i}{j}$:
  Consider a transition $q\xrightarrow{\rename{i}{j}}q'$ of \AKMV.
  
  Assume that $q$ is a realizable abstraction of some $K$-\TT $\tau$ and 
  let $\tau'=\rename{i}{j} \tau$. It is easy to check that $q'$ is a 
  realizable abstraction of $\tau'$.

  \item $\forget{i}$:
  Consider a transition $q\xrightarrow{\forget{i}}q'$ of \AKMV.

  Assume that $q$ is a realizable abstraction of some $K$-\TT $\tau$ and 
  let $\tau'=\forget{i} \tau$. It is easy to check that $q'$ is a 
  realizable abstraction of $\tau'$. In particular, the correctness of the 
  update $\acc'(i^-)$ follows from Claim~\ref{claim:real-abstr}.

  \item $\oplus$:
  Consider a transition $q_1,q_2\xrightarrow{\oplus}q$ of \AKMV.

  Assume that $q_1$ and $q_2$ are realizable abstractions of some 
  $K$-\TTs $\tau_1$ and $\tau_2$ with timestamp maps $\ts_1$ and $\ts_2$ 
  respectively. Let $\tau=\tau_1\oplus\tau_2$. We show that $q$ is a 
  realizable abstraction of $\tau$. 

    \medskip\noindent\eqref{prop:A1} %
    We have $\Act_\tau=\Act_{\tau_1}\cup\Act_{\tau_2}=P_1\cup P_2=P$.
    Moreover, using $R_1=\max(P_1)=L_2$, we deduce that $L=L_1=\Left_\tau$ and 
    $\max(P)=\max(P_2)=\Right_\tau$.
        
    \medskip\noindent\eqref{prop:A2} %
    Let $i\in P$ with $L\leq i<\max(P)$. Either $i<\max(P_1)$ and 
    we get $i\in P_1$ and $j=\nxt_P(i)=\nxt_{P_1}(i)$.  We deduce that 
    $\chi(i)=\chi_1(i)\procrel^{+}\chi_1(j)=\chi(j)$. Or $L_2=\max(P_1)\leq i$ and 
    we get $i\in P_2$ and $j=\nxt_P(i)=\nxt_{P_2}(i)$.  We deduce that 
    $\chi(i)=\chi_2(i)\procrel^{+}\chi_2(j)=\chi(j)$.
    
    \medskip\noindent\eqref{prop:A3} %
    Let ${\hole}=\{(\chi(i),\chi(\nxt_P(i)))\mid i\in P \wedge i<L\}$.  Let
    ${<}=({\procrel}\cup{\hole})^+$.  Using \eqref{prop:A2} and the definition
    of $<$, it is easy to see that for all $i,j\in P$, if $i<j$ then
    $\chi(i)<\chi(j)$.  Using ${\procrel}={\procrel_1}\uplus{\procrel_2}$, we
    deduce that ${<_1}\cup{<_2}\subseteq{<}$.
    
    Let $u\matchrel^I v$ be a timing constraint in $\sem{\tau}$.  Either it is
    in $\sem{\tau_1}$ and it is compatible with $<_1$, hence also with $<$.  Or
    it is in $\sem{\tau_2}$ and it is compatible with $<_2$ and with $<$.
    
    Using conditions $R_1=\max(P_1)=L_2$ and $\{i\in P_2\mid L_1\leq i\leq
    R_1\}\subseteq P_1$ of the transition, we deduce easily that $<$ is a total
    order on $V$ and that $(\sem{\tau},\hole)$ is a split-\TCW.
    
    Also, since the last block of $(\sem{\tau_1},\hole_1)$ is concatenated with 
    the last block of $(\sem{\tau_2},\hole_2)$, targets of timing constraints 
    are indeed in the last block of $(\sem{\tau},\hole)$.

    \medskip\noindent\eqref{prop:A4} %
    We construct the timestamp map $\ts$ for $\tau$ inductively on 
    $V=V_1\uplus V_2$ following the successor relation 
    ${\lessdot}={\procrel}\cup{\hole}$. If 
    $v=\min(V)$ is the first point of the split-\TCW, we let 
    $$
    \ts(v)=\begin{cases}
      \ts_1(v) & \text{if } v\in V_1 \\
      \ts_2(v) & \text{otherwise.}
    \end{cases}
    $$
    Next, if $\ts(u)$ is defined and $u\procrel v$ then we let
    $$
    \ts(v)=\begin{cases}
      \ts(u) + \ts_1(v)-\ts_1(u) & \text{if } u,v\in V_1 \\
      \ts(u) + \ts_2(v)-\ts_2(u) & \text{if } u,v\in V_2 \,.
    \end{cases}
    $$
    Finally, if $\ts(u)$ is defined and $u\hole v$ then, with $i,j\in P$ 
    being the colors of $u$ and $v$ ($\chi(i)=u$ and $\chi(j)=v$), we let
    $$
    \ts(v)=\begin{cases}
      \ts(u) + d_q(i,j) & \text{if } \acc(i)=\true \\
      \ts(u) + d_q(i,j)+M & \text{if } \acc(i)=\false \,.
    \end{cases}
    $$
    With this definition, the following hold
    \begin{itemize}
      \item Time is clearly non-decreasing: $\ts(u)\leq\ts(v)$ for all $u\leq v$
      
      \item $(\tsm,\acc)$ is the modulo $M$ abstraction of $\ts$.  The proof is
      by induction. 
      
      First, if $i=\min(P)$ then $v=\min(V)=\chi(i)$ and $i\in
      P_1$ iff $v\in V_1$.  Using the definitions of $\tsm$ and $\ts$, we
      deduce easily that $\tsm(i)=\ts(v)[M]$.  
      
      Next, let $i\in P$ with $j=\nxt_P(i)<\infty$.  Let $u=\chi(i)$,
      $v=\chi(j)$ and assume that $\tsm(i)=\ts(u)[M]$.
      
      If $u\procrel^+v$ and $u,v\in V_1$ then $i,j\in P_1$ and 
      \begin{align*}
        \ts(v)[M] &= (\ts(u)[M] + \ts_1(v)[M]-\ts_1(u)[M])[M] \\
        &= (\tsm(i)+\tsm_1(j)-\tsm_1(i))[M]
        = \tsm(j) \,.
      \end{align*}
      Moreover, $j=\nxt_P(i)=\nxt_{P_1}(i)$ and we get $\acc(i)=\acc_1(i)$. 
      Also, $\ts(v)-\ts(u)=\ts_1(v)-\ts_1(u)$. We 
      deduce that $\acc(i)=\true$ iff $\ts(v)-\ts(u)< M$.
      The proof is similar if $u\procrel^+v$ and $u,v\in V_2$.
      
      Now, if $u\not\procrel^+v$ then $u\hole v$ (endpoints are always 
      colored). We deduce that
      $$
      \ts(v)[M] = (\ts(u)[M] + d_q(i,j))[M] = (\tsm(i)+\tsm(j)-\tsm(i))[M] =
      \tsm(j) \,.
      $$
      Moreover, it is clear that $\ts(v)-\ts(u)< M$ iff $\acc(i)=\true$.
            
      \item Constraints are satisfied. Let $u\matchrel^Iv$ be a timing 
      constraint in $\sem{\tau}$. Wlog we 
      assume that $u,v\in V_1$. We know that $\ts_1(v)-\ts_1(u)\in I$.
      
      If $u\procrel^+v$ then we get $\ts(v)-\ts(u)=\ts_1(v)-\ts_1(u)$ from the
      definition of $\ts$ above.  Hence, $\ts(v)-\ts(u)\in I$.
      
      Now assume there are holes between $u$ and $v$ in $(\sem{\tau},\hole)$.
      Then, we have $u=\chi(i)$ for some $i\in P_1$ with $i<L=L_1$.
      Since targets of timing constraints are always in the last block 
      \eqref{prop:A3}, we get $v'=\chi(L_1)\procrel^* v$.
      
      We deduce from the definition of $\ts$ that
      $\ts(v)-\ts(v')=\ts_1(v)-\ts_1(v')$.  Now, using Claim~\ref{claim:oplus}
      below we obtain:
      \begin{itemize}
        \item Either $\ACC_{q_1}(i,L_1)=\false$ and $\ts(v')-\ts(u)\geq M$.
        From Claim~\ref{claim:real-abstr} we also have $\ts_1(v')-\ts_1(u)\geq M$.
        We deduce that $I.up=\infty$ and $\ts(v)-\ts(u)\in I$.
        
        \item Or $\ACC_{q_1}(i,L)=\true$ and 
        $\ts(v')-\ts(u)=\ts_1(v')-\ts_1(u)$. Therefore, 
        \begin{align*}
          \ts(v)-\ts(u) & = \ts(v)-\ts(v') + \ts(v')-\ts(u) \\
          & = \ts_1(v)-\ts_1(v') + \ts_1(v')-\ts_1(u)
          = \ts_1(v)-\ts_1(u)\in I
          \qedhere
        \end{align*}
      \end{itemize}
    \end{itemize}
  \end{itemize}
\end{proof}

\begin{figure}[h]
\centerline{\includegraphics[scale=0.3,page=37]{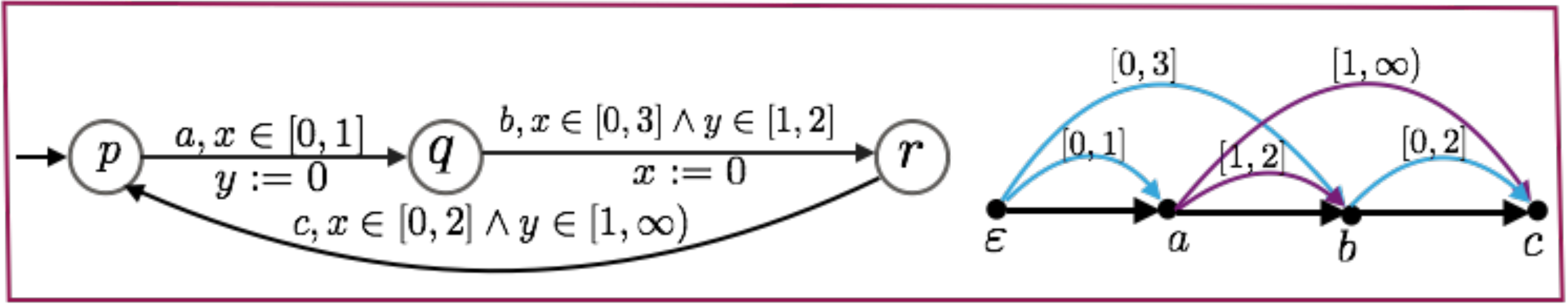}}	
\caption{While doing $\oplus$, the ``accuracy'' of a point $i<L'$ or $i<L''$ can
change from false to true, depending on the new next point obtained after the
combine.  However, if $i$ was accurate before the combine, it will stay
accurate after the combine}
 \label{fig:small-big}
\end{figure}

\begin{claim}\label{claim:oplus}
  Let $i,j\in P_1$ with $i\leq j$ and let $u=\chi(i)$ and $v=\chi(j)$.
  \begin{enumerate}
    \item If $\ACC_{q_1}(i,j)=\false$ then $\ts(v)-\ts(u)\geq M$.
    
    \item If $\ACC_{q_1}(i,j)=\true$ then $\ts(v)-\ts(u)=\ts_1(v)-\ts_1(u)$.
  \end{enumerate}
\end{claim}

\begin{proof}
  The proof is by induction on the number of points in $P_1$ between $i$
  and $j$.  The result is clear if $i=j$.  So assume that
  $k=\nxt_{P_1}(i)\leq j$ and let $w=\chi(k)$.  By induction, the claim
  holds for the pair $(k,j)$.
  \begin{enumerate}
    \item If $\ACC_{q_1}(i,j)=\false$ then either $\acc_1(i)=\false$ or 
    $\ACC_{q_1}(k,j)=\false$. 
    
    In the first case, by definition of the transition for $\oplus$, we
    have either $\ACC_q(i,k)=\false$ or $D_q(i,k)\geq M$.  In both cases,
    we get $\ts(w)-\ts(u)\geq M$ by Claim~\ref{claim:real-abstr}.
    
    In the second case,  we get $\ts(v)-\ts(w)\geq M$ by induction.
    
    Since $\ts$ is non-decreasing, we obtain $\ts(v)-\ts(u)\geq M$.
    
    \item If $\ACC_{q_1}(i,j)=\true$ then $\acc_1(i)=\true$ and 
    $\ACC_{q_1}(k,j)=\true$. 
    
    By induction, we obtain $\ts(v)-\ts(w)=\ts_1(v)-\ts_1(w)$.
    
    From the definition of the transition for $\oplus$, since 
    $\acc_1(i)=\true$, we get $\ACC_q(i,k)=\true$ and $D_q(i,k)<M$. 
    Using Claim~\ref{claim:real-abstr} we deduce that 
    $\ts(w)-\ts(u)=D_q(i,k)$. Now, $D_q(i,k)<M$ implies 
    $D_q(i,k)=d_q(i,k)=d_{q_1}(i,k)$. Using again 
    Claim~\ref{claim:real-abstr} we get 
    $d_{q_1}(i,k)=\ts_1(w)-\ts_1(u)$. We conclude that             
    $\ts(w)-\ts(u)=\ts_1(w)-\ts_1(u)$.
    
    Combining the two equalities, we obtain 
    $\ts(v)-\ts(u)=\ts_1(v)-\ts_1(u)$ as desired.
    \qedhere
  \end{enumerate}
\end{proof}

\subsection{Correctness of \AKMV, Complexity}
\label{app:valid-correct}

\paragraph*{Correctness of the Construction}
\medskip\noindent$(\subseteq)$ 
Let $\tau$ be a \TT accepted by \AKMV. There is an accepting run of \AKMV
reading $\tau$ and reaching state $q$ at the root of $\tau$.  By 
Lemma~\ref{lem:AKMV-tcw}, the term $\tau$ is good and state $q$ is a
realizable abstraction of $\tau$, hence $(\sem{\tau},\hole)$ is a split-\TCW.
But since $q$ is accepting, we have ${\hole}=\emptyset$.  Hence $\sem{\tau}$ is
a \TCW. From \eqref{prop:A4} we deduce that $\sem{\tau}$ is 
realizable and the endpoints of $\sem{\tau}$ are the only colored points by 
\eqref{prop:A1} and the acceptance condition.

\medskip\noindent$(\supseteq)$ 
Let $\tau$ be a good \kTTs such that $\sem{\tau}=(G,\chi)$ is a realizable
\TCW and the endpoints of $\sem{\tau}$ are the only colored points.  
Let $\ts\colon V\to\N$ be a timestamp map satisfying all the timing constraints
in $\tau$.  We construct a run of $\AKMV$ on $\tau$ by resolving the
non-deterministic choices as explained below.  Notice that the transitions for
$\rename{i}{j}$ and $\forget{i}$ are deterministic.
We will obtain an accepting run $\rho$ of \AKMV on $\tau$ such that for every
subterm $\tau'$, the state $\rho(\tau')$ satisfies \eqref{prop:A4} with
timestamp map $\ts$, or more precisely, with the restriction of $\ts$ to the
vertices in $\sem{\tau'}$.

\begin{itemize}
  \item A leaf $(a,i)\procrel(b,j)$ of the term $\tau$ corresponds to two 
  vertices $u,v\in V$ with $u\procrel v$. We have $i<j$ since $\tau$ is good so 
  the transition is enabled for this atomic subterm.
  We resolve non-determinism by setting
  $\tsm(i)=\ts(u)[M]$, $\tsm(j)=\ts(v)[M]$ and $\acc(i)=\true$ iff
  $\ts(v)-\ts(u)<M$.
  Therefore, \eqref{prop:A4} holds with $\ts$.

  \item A leaf $(a,i)\matchrel^{I}(b,j)$ of the term $\tau$ corresponds to two 
  vertices $u,v\in V$ with $u\matchrel^{I} v$. Since $\ts$ satisfies all timing 
  constraints, we have $\ts(v)-\ts(u)\in I$. 
  The transition taken at this leaf resolves non-determinism by setting
  $\tsm(i)=\ts(u)[M]$, $\tsm(j)=\ts(v)[M]$ and $\acc(i)=\true$ iff
  $\ts(v)-\ts(u)<M$. We can check that all conditions enabling this transition
  are satisfied. Moreover, \eqref{prop:A4} holds with $\ts$.

  \item We can check that the conditions enabling transitions at
  $\rename{i}{j}$ or $\forget{i}$ nodes are satisfied
  since $\tau$ is good and $\sem{\tau}$ is a \TCW whose endpoints are
  colored.
  
  \item Consider a subterm $\tau'=\tau_1\oplus\tau_2$.  Let
  $\rho(\tau_1)=q_1=(P_1,L_1,\tsm_1,\acc_1)$ and
  $\rho(\tau_2)=q_2=(P_2,L_2,\tsm_2,\acc_2)$. 
  Define $q'=(P',L',\tsm',\acc')$ by $P'=P_1\cup P_2$, $L'=L_1$,
  $\tsm'=\tsm_1\cup\tsm_2$ and for all $i\in P'$, $\acc'(i)=\true$ iff 
  $i^{+}\neq\infty$ and $\ts(\chi'(i^{+}))-\ts(\chi'(i))<M$.
  We show that $q_1,q_2\xrightarrow{\oplus}q'$ is a transition. 
  
  The condition $R_1=\max(P_1)=L_2$ and $\{i\in P_2\mid L_1\leq i\leq
  R_1\}\subseteq P_1$ holds since $\tau$ is a good term and $q_1$, $q_2$ are
  realizable abstractions of $\tau_1$, $\tau_2$.
  
  Now, we look at the condition on $\acc'$.  Let $i\in
  P_1\setminus\{\max(P_1)\}$ and $j=\nxt_{P_1}(i)$.  We have $\acc_1(i)=\true$
  iff $\ts(\chi_1(j))-\ts(\chi_1(i))<M$ since \eqref{prop:A4} holds with $\ts$
  at $\tau_1$.  The latter holds iff for all $k\in P'$ with $i\leq k<j$ we have
  $\ts(\chi'(k^{+}))-\ts(\chi'(k))< M$ (i.e., $\ACC_{q'}(i,j)=\true$ by the
  above definition of $\acc'$) and $D_{q'}(i,j)< M$ (again, by the definition of
  $\acc'$ we have that $\acc'(k)=\true$ implies
  $d_{q'}(k,k^{+})=\ts(\chi'(k^{+}))-\ts(\chi'(k))$ and $\ACC'(i,j)=\true$
  implies $D_{q'}(i,j)=\ts(\chi'(j))-\ts(\chi'(i))$).  
\end{itemize}

\subsection*{Complexity of \AKMV} A state of \AKMV has the form 
$(P, L, \tsm, \acc)$ where $P$ is a subset of $K$, and \tsm, \acc are maps 
from $P$. Clearly, the complexity is dominated by the map \tsm as long as 
$M \geq 2$. Thus, the 
 number of states of \AKMV 
is $M^{\mathcal{O}(K)}$. 
\end{proof}

\section{Tree Automaton for the Timed System}
\label{app:sys-trans}
In this section, we give the full list of transitions of the tree automaton \AKMS.
The transitions of \AKMS are described in Tables~\ref{tab:AKMS-1} and \ref{tab:AKMS-2}. 
\begin{table}[h]
  \noindent\hspace{-5mm}
  \begin{tabular}{|c|p{120mm}|}
    \hline
    $(a,i)\procrel(b,j)$ 
    &
    $\xrightarrow{(a,i)\procrel(b,j)}q$ 
    is a transition if $i<j$ and $P=\{i,j\}$, $L=i$, $\pushb=\popb=0$, $G_x=0$
    and $Z_x=\bot$ for all $x\in X$, $\delta$ is guessed such that
    $\mathsf{target}(\delta(i))=\mathsf{source}(\delta(j))$.
    Further, if $a=\epsilon$ then we take a special initial dummy transition
    $\delta(i)=(s_\mathsf{dummy},\true,\varepsilon,\nop,X,s_0)$.
    \\ \hline
    $\rename{i}{j}$
    &
    $q\xrightarrow{\rename{i}{j}}q'$ is a transition if $i\in P$ and
    $i^-<j<i^+$, and $q'$ is obtained from $q$ by replacing $i$ by $j$.
    \\ \hline
    $\forget{i}$
    &
    $q\xrightarrow{\forget{i}}q'$ is a transition if $i\in P$, $L<i<\max(P)$
    (endpoints should stay colored) and for each $x\in\reset(i)$ there exists
    $j\in P$ such that $i<j\leq R$ and $x\in\reset(j)$ (the last reset point of 
    each clock is never forgotten, even if it is an internal point).
    Then, state $q'$ is deterministically given by $P'=P\setminus\{i\}$, $L'=L$,
    $\delta'=\delta_{\mid P'}$, $\pushb'=\pushb$, $\popb'=\popb$, $G'=G$ and
    $Z'=Z$.
    \\ \hline
    $(a,i)\matchrel^{I}(b,j)$ 
    &
    $q,(a,i)\matchrel^{I}(b,j)\xrightarrow{\oplus}q'$ 
    is a transition if  one of the following conditions holds:

    $\mathsf{M}_1$: 
      the automaton guesses that it is a stack edge:
      $i=L<j=R=\max(P)$ and $\pushb=0=\popb$ (a push-pop edge may be added to
      $(L,R)$ only if no push-pop edges were added to $L$ or to $R$ before),
      $\delta(i)$ contains some $\push_c$ operation and $\delta(j)$ contains a
      $\pop_c^{I}$ operation.  Then, $P'=P, L'=L, \delta'=\delta$,
      $\pushb'=1=\popb'$, $G'=G$ and $Z'=Z$.

      \includegraphics[scale=0.3,page=26]{all-figs} 
  
      $\mathsf{M}_2$: 
      Or, the automaton guesses that it is a constraint for some clock $x\in X$:
      \begin{itemize}[nosep]
        \item[(a)] $i<j=R=\max(P)$, $G_x=0$, $x\in I$ is in $\delta(j)$ and $G'_x=1$, 

        \item[(b)] ($i\in P$ or $i<L$) and $x\in\reset'(i)$ and $x\not\in\reset(k)$ for
        all $k\in P$ with $i<k<j$

        \item[(c)] Either ($L\leq i$ and $Z'_x=Z_x$) or ($i<L$ and $Z_x\in
        \{\bot,i\}$ and $Z'_x=i$).
      \end{itemize}
      Then, $P'=P\cup \{i\}$, $L'=L$, $\delta'(k)=\delta(k)$ for all $k\in P$,
      $\pushb'=\pushb$, $\popb'=\popb$, $G'_y=G_y$ and $Z'_y=Z_y$ for all $y\in
      X\setminus\{x\}$.  Note that if $i\not\in P$, then $\delta'(i)$ is
      guessed. The figure below considers the case when $i < L, i \notin P$, that is, $i$ 
      is a new reset point. This $i$ gets added in the set of active colors $P'$. 
       \includegraphics[scale=0.35,page=25]{all-figs} 
  
  \\
     \hline
   \end{tabular}
  \caption{Transitions of \AKMS. 
  $q=(P,L,\delta,\pushb,\popb,G,Z)$, $q'=(P',L',\delta',\pushb',\popb',G',Z')$.}
  \protect\label{tab:AKMS-1}
\end{table}

\begin{table}[h!]
  \noindent\hspace{-2mm}
  \begin{tabular}{|c|p{130mm}|}
    \hline
    $\oplus$
    &
    $q',q''\xrightarrow{\oplus}q$ is a transition if
    the following hold:
    
             \centerline{\includegraphics[scale=0.3,page=27]{all-figs}}
     \begin{enumerate}[nosep,label= $\mathsf{C}_{\arabic*}$:,ref=$\mathsf{C}_{\arabic*}$]
      \item $R'=\max(P')=L''$ and $\{i\in P''\mid L'\leq i\leq R'\}\subseteq
      P'$ (we cannot insert a new point from the second argument in the
      $\procrel$-block of the first argument).
 Note that according to $\mathsf{C_1}$, the points
\protect\includegraphics[scale=0.25,page=28]{all-figs},
\protect\includegraphics[scale=0.25,page=29]{all-figs} and
\protect\includegraphics[scale=0.25,page=35]{all-figs} in $P''$ lying between
$L', R'$ are already points in the non-trivial block connecting $L'$ to $R'$.

      \item $\forall i\in P'\cap P''$, $\delta'(i)=\delta''(i)$ (the guessed
      transitions match).
      By $\mathsf{C}_2$, the transitions $\delta', \delta''$ of
      \protect\includegraphics[scale=0.25,page=30]{all-figs},
      \protect\includegraphics[scale=0.25,page=31]{all-figs},
      \protect\includegraphics[scale=0.25,page=36]{all-figs},
      \protect\includegraphics[scale=0.25,page=28]{all-figs},
      \protect\includegraphics[scale=0.25,page=29]{all-figs} and
      \protect\includegraphics[scale=0.25,page=35]{all-figs} must match.

      \item if there is a $\pushb$ operation in $\delta''(L'')$ then
      $\pushb''=1$ and if there is a pop operation in $\delta'(R')$ then
      $\popb'=1$ (the push-pop edges corresponding to the merging point have
      been added, if they exist). By $\mathsf{C}_3$, if $\delta(R')=\delta(L'')$ contains a pop (resp.\ push)
operation then $R'=L''$ is the target (resp.\ source) of a push-pop edge.

      \item if some guard $x\in I$ is in $\delta(R')$, then $G'_x=1$ 
      (before we merge, we ensure that the clock guard for $x$ in the transition guessed at $R'$, if any, has been checked). After the merge, $R'=L''$ becomes an internal point; hence by $\mathsf{C}_4$,
any guard $x \in I$ in $\delta'(R')$ must be checked already, i.e., $G'_x=1$.
After the merge, it is no more possible to add an edge $\curvearrowright^I$
leading into $R'$.

      \item if $Z'_x\neq\bot$, then $\forall j\in P''$, $Z'_x<j< L'$ implies
      $x\not\in Reset''(j)$ (If a matching edge starting at $Z'_x<L'$ had been
      seen earlier in run leading to $q'$, then $x$ should not have been reset
      in $q''$ between $Z'_x$ and $L'$, else it would violate the consistency of clocks).  By
      $\mathsf{C}_5$, if $Z'_x$ is
\protect\includegraphics[scale=0.25,page=30]{all-figs} (resp.\
\protect\includegraphics[scale=0.25,page=31]{all-figs}), i.e.,
\protect\includegraphics[scale=0.25,page=30]{all-figs} (resp.\
\protect\includegraphics[scale=0.25,page=31]{all-figs}) is the source of a
timing constraint $\curvearrowright^I$ for clock $x$ whose target is in the
$L'$--$R'$ block,
then clock $x$ cannot be reset at
\protect\includegraphics[scale=0.25,page=32]{all-figs} and 
\protect\includegraphics[scale=0.25,page=33]{all-figs}
(resp.\ \protect\includegraphics[scale=0.25,page=33]{all-figs}).

      \item if $Z''_x\neq\bot$, then $\forall j\in P'$, $Z''_x<j<L''$ implies
      $x\not\in Reset'(j)$ (If a matching edge starting at $Z''_x<L''$ had been
      seen earlier in run leading to $q''$, then $x$ should not have been reset
      in $q'$ between $Z''_x$ and $L''$).  By $\mathsf{C}_6$, if $Z''_x$ is
\protect\includegraphics[scale=0.25,page=32]{all-figs}, then $x$ cannot be reset
at \protect\includegraphics[scale=0.25,page=31]{all-figs}, 
\protect\includegraphics[scale=0.25,page=36]{all-figs},
\protect\includegraphics[scale=0.25,page=28]{all-figs}, or
\protect\includegraphics[scale=0.25,page=29]{all-figs}.
Likewise, if $Z''_x$ was \protect\includegraphics[scale=0.25,page=33]{all-figs},
then clock $x$ cannot be reset at
\protect\includegraphics[scale=0.25,page=36]{all-figs},
\protect\includegraphics[scale=0.25,page=28]{all-figs}, or
\protect\includegraphics[scale=0.25,page=29]{all-figs}.
      
      \item $P=P'\cup P''$, $L=L'$, $\delta=\delta'\cup \delta''$,
      $\pushb=\pushb'$, $\popb=\popb''$, $G=G''$ and \newline
      for all $x\in X$ we have $Z_x=Z''_x$ if $Z''_x<L'$, else $Z_x=Z'_x$. 
      $\mathsf{C}_7$ says that on merging, we obtain
the third split-\TCW.
After the merge, if $Z_x$ is defined, it must be on the left of $L'$, i.e., one of
\protect\includegraphics[scale=0.25,page=30]{all-figs},
\protect\includegraphics[scale=0.25,page=32]{all-figs},
\protect\includegraphics[scale=0.25,page=31]{all-figs},
\protect\includegraphics[scale=0.25,page=33]{all-figs}.

      Notice that the above three conditions ensure the well-nestedness of clocks. 
  By $\mathsf{C}_5$ and $\mathsf{C}_6$ we cannot have both
$Z'_x\in\{\protect\includegraphics[scale=0.25,page=30]{all-figs},
\protect\includegraphics[scale=0.25,page=31]{all-figs}\}$ and
$Z''_x\in\{\protect\includegraphics[scale=0.25,page=32]{all-figs},
\protect\includegraphics[scale=0.25,page=33]{all-figs}\}$.  So if
$Z''_x\in\{\protect\includegraphics[scale=0.25,page=32]{all-figs},
\protect\includegraphics[scale=0.25,page=33]{all-figs}\}$ then $Z_x=Z''_x$ and
otherwise $Z_x=Z'_x$ (including when
$Z''_x\in\{\protect\includegraphics[scale=0.25,page=28]{all-figs}, 
\protect\includegraphics[scale=0.25,page=29]{all-figs}\}$ and $Z'_x=\bot$).
  
      \end{enumerate}\\
     \hline
   \end{tabular}
  \caption{$\oplus$ transitions of \AKMS. 
  $q=(P,L,\delta,\pushb,\popb,G,Z)$, $q'=(P',L',\delta',\pushb',\popb',G',Z')$,
  $q''=(P'',L'',\delta'',\pushb'',\popb'',G'',Z'')$.}
  \protect\label{tab:AKMS-2}
\end{table}

\subsection{Proof of Theorem \ref{thm:AKMS}}
\label{app:AKMS}
\begin{proof}[Proof sketch]
  Let $\stt$ be a \kTT. We will show that $\stt$ is accepted by \AKMS iff $\stt$
  is restricted and $\sem{\stt}\in\TCW(\Sys)$.

  Assume that \AKMS has an accepting run on $\stt$.  Clearly, $\stt$ is
  restricted.  Now, the first two components $(P,L)$ of \AKMS behave as the
  corresponding ones in \AKMV and ensure that $\tcw=\sem{\stt}$ is indeed a \TCW
  (which need not be realizable). It remains to check that \Sys admits a run on 
  \tcw.
  
  \begin{enumerate}
    \item We first define the sequence of transitions.  Each vertex $v$ of \tcw
    is introduced as node colored $j$ in (1) some atomic term $i\procrel j$ if
    $v$ is not minimal in \tcw or in (2) some atomic term $j\procrel k$ if $v$
    is not maximal in \tcw.  We let $\delta(v)=\delta(j)$ be the transition
    guessed by \AKMS when reading this atomic term.  Notice that if both cases
    above occur, i.e., if $v$ is internal, then \AKMS has to guess the same
    transition by $\mathsf{C}_2$.  Notice also that if $u\procrel v$ in \tcw
    then for some atomic term $i\procrel j$ occurring in $\stt$ we have
    $\delta(u)=\delta(i)$ and $\delta(v)=\delta(j)$.  Therefore,
    $\mathsf{target}(\delta(u))=\mathsf{source}(\delta(v))$.  So we have
    constructed a sequence of transitions $(\delta(v))_v$ which forms a path in
    \Sys reading \tcw.  By the acceptance condition, if $v$ is the minimal
    (resp.\ maximal) vertex of \tcw then $\delta(v)$ is the initial dummy
    transition (resp.\ $\mathsf{target}(\delta(v))$ is final).

    \item To ensure that the TCW is generated by the system, we check that the
    sequence of push-pop operations is well-nested.  This is achieved using bits
    $\pushb$, $\popb$ and the following facts (i) by $\mathsf{C}_3$ and the
    accepting condition, every vertex $v$ such that $\delta(v)$ contains a push
    (resp.\ pop) operation is the source (resp.\ target) of a matching push-pop
    edge, (ii) push-pop edges are only within the non-trivial block, (iii) the
    left end-point is the source of a push-pop edge iff $\pushb=1$ and the right
    end-point is the target of a push-pop edge iff $\popb=1$, (iv) by
    $\mathsf{M}_1$ a push-pop edge is added only between the left and right end
    points of the non-trivial block and only when $\pushb=0=\popb$, which are
    updated to $1$, (v) a combine fuses the right end of a non-trivial block
    with the left end of another one.  Because the transition of this fused
    point cannot carry both a push operation and a pop operation, this ensures
    well-nesting under the combine operation.

    \item The last subtle point concerns clock constraints.  First,
    $\mathsf{M}_2$, $\mathsf{C}_4$ and the acceptance condition, with the help
    of $G_x$ make sure that if the transition $\delta(v)$ of some point $v$
    contains a guard $x\in I$ then $v$ is the target of a $\matchrel^{I}$ edge
    whose source is some node $u$ such that $x$ is reset in $\delta(u)$.  Let us
    explain why $u$ is the last reset for clock $x$ in the past of $v$.  By
    $\mathsf{M}_2$ when the edge $u\matchrel^{I}v$ is added, $v$ is the current
    right most point and no colored points between $u$ and $v$ reset clock $x$.
    Since the last reset of clock $x$ is never forgotten (see transition for
    $\forget{i}$), we deduce that there are no resets of clock $x$ between $u$
    and $v$ in the current split-\TCW. If $u$ is in the non-trivial block, no
    further points will be added between $u$ and $v$, hence we are done.  Assume
    now that $u$ is a reset point colored $i<L$ on the left of the non-trivial
    block in the current split-\TCW. By $\mathsf{M}_2$ we store the color $i$ of
    $u$ in $Z_x$.  When we later use a combine operation, $\mathsf{C}_5$ and
    $\mathsf{C}_6$ ensure that no transitions resetting clock $x$ are inserted
    between $u$ and (the non-trivial block containing) $v$.
    
  \end{enumerate}
  Thus, we obtain that $\tcw$ is indeed generated by $\Sys$, i.e., $\tcw\in
  \TCW(\Sys)$.  In the reverse direction, if $\tcw\in \TCW(\Sys)$, then there is
  a sequence of transitions which lead to the accepting state on reading $\tcw$.
  By guessing each of these transitions correctly at every point, we can
  generate the run of our automaton \AKMS. 
\end{proof}

\subsection*{Complexity of \AKMS}
 Recall that a state of \AKMS is a tuple $q=(P,L,\delta,\pushb,\popb,G,Z)$ where,
\begin{itemize}
  \item $P$ is the set of active colors, and $L=\mathsf{Left} \in P$ is the left-most point
  that is connected to the right-end-point $R=\mathsf{Right}=\max(P)$ by successor edges on the non-trivial block. 
  $P$ thus any subset of $K$.

  \item $\delta$ is a map that assigns to each color $k\in P$ the transition $\delta(k)$ guessed at the leaf corresponding to color $k$,
$\delta$ has size $|\Sys|^{\mathcal{O}(K)}$ where $|\Sys|$ denotes the size (number of transitions)  of the TPDA,  

  \item $\pushb$ and $\popb$ are two boolean variables: $\pushb=1$ iff a push-pop edge has been added to $L$ and $\popb=1$ iff a push-pop edge has been added to $R$,
  
  \item $G=(G_x)_{x\in X}$ is a boolean vector of size $|X|$: for each clock $x\in X$, $G_x=1$ iff some constraint on $x$ has already been checked at $R$. The number of possible vectors is thus $2^{\mathcal{O}(|X|)}$, 

  \item $Z=(Z_x)_{x\in X}$ assigns to each clock $x$ either the color $i\in P$
  with $i<L$ of the unique point on the left of the non-trivial block which is
  the source of a timing constraint $i\matchrel^{I}j$ for clock $x$, or $\bot$
  if no such points exist. The size of $Z$ is thus $(K+1)^{\mathcal{O}(|X|)}$.
\end{itemize}
Clearly, the number of states of \AKMS is $\leq |\Sys|^{\mathcal{O}(K)}(K+1)^{\mathcal{O}(|X|)}$. 

\subsection{Proof of Theorem \ref{thm:sys-cor}}
\label{app:sys-corr}

\begin{proof}[Proof sketch]
  ($\implies)$ If $L(\Sys)$ is not empty, then there exists a realiziable TCW
  $W$ accepted by $\Sys$.  Now $W$ is well-timed and hence we know that its
  tree-width is bounded by a constant $K\leq 3|X|+3$.  That is, by the proof of
  Lemma~\ref{TPDA-bound} in Section~\ref{sec:stw-tpda}, Eve has a
  winning-strategy on $W$ with at most $K$ colors.  Further, we may observe that
  Eve's strategy on $W$ gives us a $K$-\TT $\tau$ which a \emph{good} and, in
  fact, a \emph{restricted} $K$-\TT, such that $\sem{\tau}=W$.

  Now, $\tau$ is a good $K$-\TT such that $\sem{\tau}=W$ is a realizable \TCW.
  Thus, by Theorem~\ref{thm:AKMV}, $\tau\in L(\AKMV)$.  Further, as $\tau$ is
  restricted and $W\in L(\Sys)$, by Theorem~\ref{thm:AKMS}, $\tau\in L(\AKMS)$.
  Thus we have have $L(\AKMS\cap\AKMS)\neq \emptyset$.

  $(\Longleftarrow)$ Let $\tau\in L(\AKMS\cap \AKMS)$.  Then, by
  Theorem~\ref{thm:AKMS}, we get that $\sem{\tau}\in\TCW(\Sys)$.  Again by
  Theorem~\ref{thm:AKMV}, $\sem{\tau}$ is a realizable $\TCW$.  Thus, we get
  that $\sem{\tau}\in L(\Sys)$.
\end{proof}

\section{Implmententation and Experimental Results}
\label{app:exp}

In this section, we give the  missing details regarding constraints 
of the maze in Figure \ref{tpda-maze-tricky-egx}, rules for 
constructing the TPDA from the maze, and finally 
the performance of our tool on some examples.

\subsection{Detailed constraints for Figure \ref{tpda-maze-tricky-egx}}
For convenience, we reproduce the figure here. 
 \begin{figure}[t]
\begin{center}
\includegraphics[height=5cm,width=6cm]{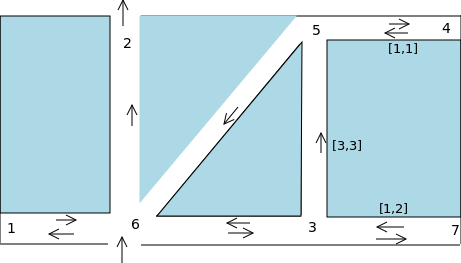}
\includegraphics[scale=0.3]{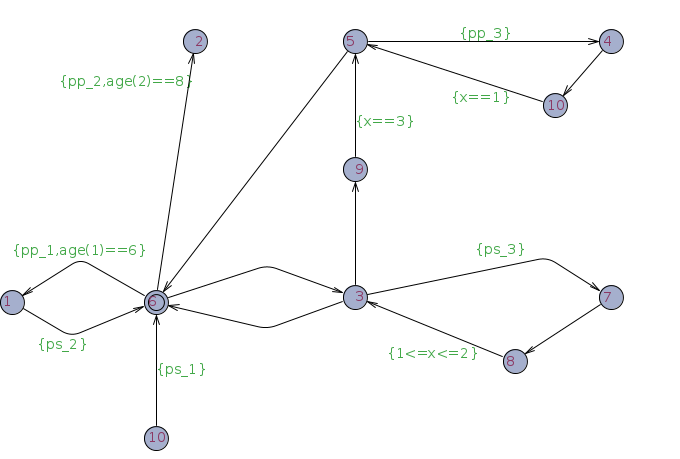}
 \end{center}
\caption{A simple maze. Every junction, dead end, entry point or exit point is called a  place. 
The places are 1 to 7. 6 is the entry and 2 is the exit.  1, 7 and 4 are dead ends. 
Time intervals denote the time taken between adjacent places; for instance, a time between 1 and 2 units 
must be elapsed to go between places 3,7. A unidirectional arrow represents a one way, for example, places 3 to 5. 
On the right, is the TPDA model of the maze.  
} 
\label{app:tpda-maze-tricky-egx}
\end{figure}

 The following are the constraints that must be respected to traverse the maze 
 and successfully exit it, starting at the entry point. 
\begin{itemize}
\item[Type (1)] Logical constraints: Place 1 must be visited exactly once. From the time we enter the maze to the visiting of place 1, one must visit place 7 and place 4 equal number of times and at any point during time, number of visits to place 7 is not less than number of visits to place 4. Similarly, from visiting place 1 to exiting the maze, one must visit place 7 and place 4 equal number of times and again at any point during time, number of visits to place 7 is not less than number of visits to place 4. (thus, we model loading at $7$ and unloading at $4$).

\item[Type (2)] Local time constraints: Time taken from one place to another adjacent place is as given in the Figure. The time taken from some place $i$ to another adjacent place $j$ is given as a closed interval $[a_{ij},b_{ij}]$ along with the arrow. $a_{ij}$ is the least time taken from place $i$ to $j$ and $b_{ij}$ is the upper bound on time taken from place $i$ to place $j$. One cannot spend any time between a pair of adjacent places other than the ones specified in the maze. For example,   we have not also specified time taken from place 6 to place 1. So time bound for going from place 6 to 1 is $[0,0]$. Further, one cannot stay in any place for non-zero duration.
\item[Type (3)] Global time constraints: From entering into the maze to visiting of place 1, time taken should be exactly $m$ units (a parameter).  From visiting place 1 to exit, time bound should be exactly $n$ units (another parameter).
\end{itemize}

\subsection{Maze to TPDA construction details}
	\textbf{ States of the automaton} :
	For each place of the maze we have a corresponding state in the automaton. We call these states as regular states. For each of the constraints, we may have to add extra states in the automaton. For the constraints of type (2), if there is a time bound other than zero([0, 0]) from place $i$ to place $j$, then we add an extra state $k$ in the automaton between regular state $i$ and regular state $j$. If it has been given time bound from entry of the maze to visiting of some place, then we have to add one more extra state before the entry point.

\textbf{Transitions of the automaton} :
	If two states $i$ and $j$ are adjacent, then there is a transition $T_{ij}$ for this. By default, we can't stay non-zero time in a regular state $i$. So, clock $x_1$ is reset in the incoming transitions to state $i$ and there is a check $x_1==0$ in the outgoing transitions from state $i$ for each state $i$. If we have to stay $[a_i, b_i]$ time in the state i, then the check could have been $a_i <= x_1 <= b_i$. If there is a time bound $[a,b]$ going from place $p$ to some of its adjacent place $q$(Type (2) constraint) and if place $r$ is added in between, then one transition added from place $p$ to $r$, where clock $x$ reset and one transtion from place $r$ to $q$ added, where check of clock x happens with constraint $[a,b]$. Nested time bound can be done using pushing and popping of same symbol between two events. If it has been given that time bound from entering into the maze to visiting of place $p$ must be $[a,b]$(Type (3) constraint), then push some symbol while entering into the maze and pop the same symbol while visiting place $p$ such that age of the stack symbol must belong to $[a,b]$. If one has to visit place $p_1$ and place $p_2$ same number of times(Type (1) Constraint), then it can be done by pushing some symbol while visiting place $p_1$ and popping of the same symbol while visiting place $p_2$.

\subsection{Experimental Results}

\label{app:expt}

\subsubsection{Constraints on maze 2}
\begin{itemize}
\item[Type (1)] Place 1 and place 2 must be visited exactly once.

\item[Type (2)] Time taken from one place to another adjacent place is given in the maze itself. One cannot spend any time between a pair of adjacent places other than the ones specified in the maze. Further, one cannot stay in any place for non-zero duration. For any two adjacent places $p$ and $q$, one can go from p to q or q to p. In other words all corridors in the maze are bidirectional.

\item[Type (3)]
Total time taken to visit the maze should be at least 5 time units and at most 7 time units.

\end{itemize} 

\begin{figure}[h!]
\includegraphics[scale=0.3]{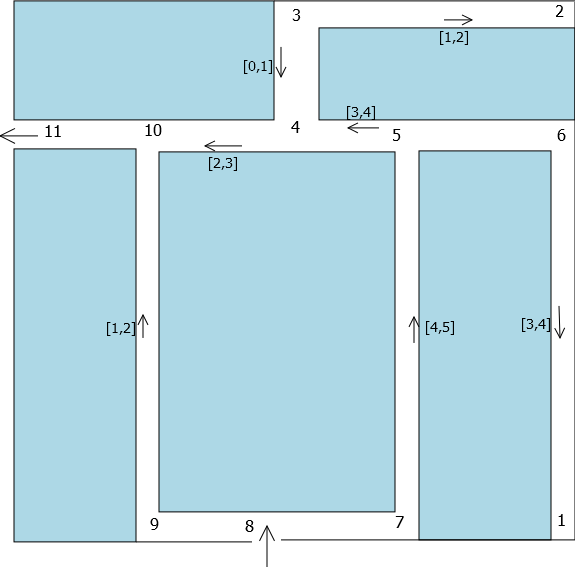}
\includegraphics[scale=.43]{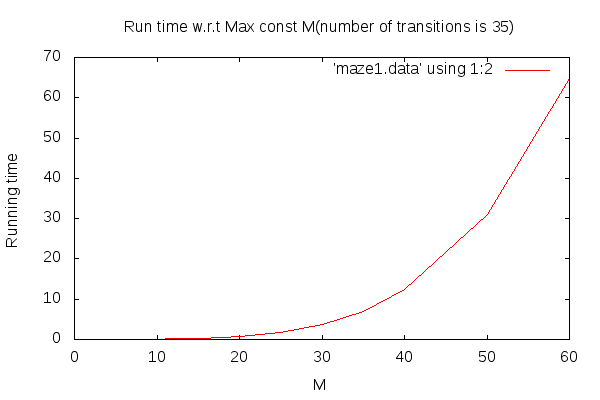}
\caption{Maze 2 and its corresponding  plot}
\label{fig:graph2}
\end{figure}

\subsubsection{Constraints on maze 3}
\begin{itemize}
\item[Type (1)] You must visit place 1, place 2 and place 3 exactly once. Between entering into the maze and visiting of place 1, one must visit place 4 and place 5 same number of times, but in any moment number of visits to place 4 is not less than the number of visits to place 5. Same type of constraints on place 4 and 5 applied between visiting of place 2 and visiting of place 3. Again Same type of constraints on place 4 and 5 applied between visiting of place 3 and exiting from the maze.
\item[Type (2)] Time spent between two adjacent places is given on the maze itself. You have to stay in place 1 and place 2 exactly one time unit for each of them. You can't stay in other places except 1 and 2. All corridors in the maze are bidirectional.

\item[Type (3)]  Total time taken to visit the maze is [9,9]. After entering into the maze, one must visit place 1 within [1,3]. After visiting place 1, one must visit place 2 within [5,5]. After visiting place 2, one must exit from the maze in [3,5] time.
\end{itemize}

\begin{figure}[h!]
\includegraphics[scale=0.3]{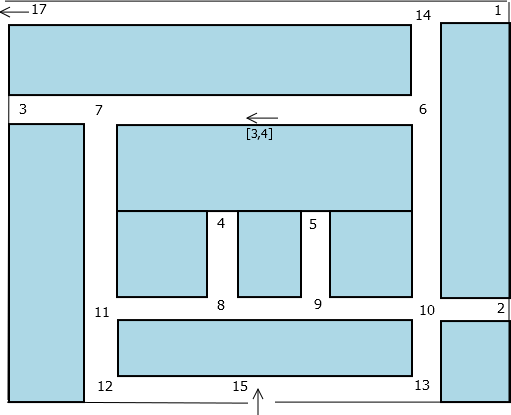}
\includegraphics[scale=0.43]{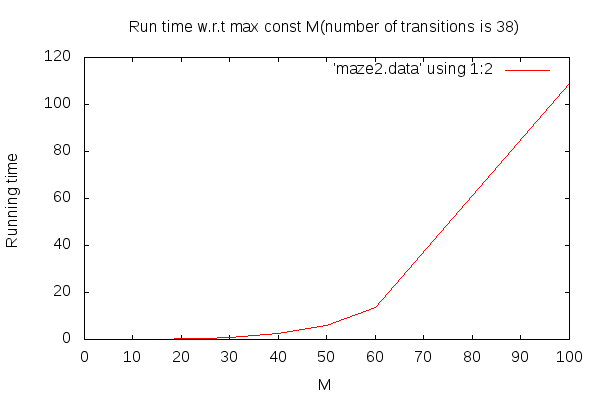}
\caption{Maze 3 and its corresponding  plot }

\end{figure}

\subsubsection{Constraints on maze 4}
\begin{itemize}
\item[Type (1)] You must visit place 1 and place 2 exactly once.

\item[Type (2)]  You can't spend any time between two places except the ones specified in the maze itself. One can't stay in any places for non-zero time except for place 4, where one can stay for [1,2] time unit. All corridors in the maze are bidirectional.

\item[Type (3)] Global time bound or total time taken to visit the maze is [3,4]. After visiting place 1, one must visit place 2 within [3,3] time. After visiting place 2, one must exit from the maze within [2,3] time.
\end{itemize}

\begin{figure}[h!]
\includegraphics[scale=0.3]{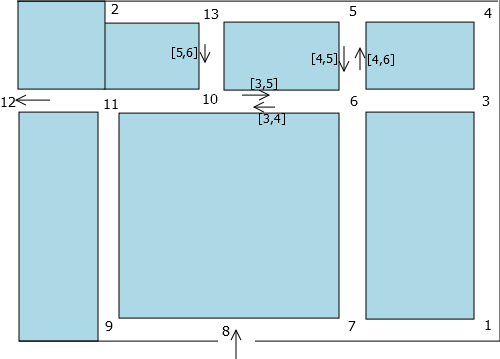}
\includegraphics[scale=0.43]{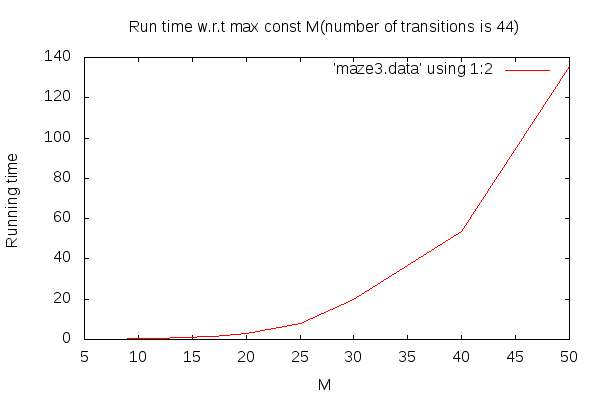}
\caption{Maze 4 and its corresponding plot}

\label{fig:graph-big}
\end{figure}

\end{document}